\numberwithin{equation}{section}
\newtheorem{thm}{Theorem}  % Theorem environment
\newtheorem{prop}[thm]{Proposition}  % Proposition environment
\title{Model-Independent Price Bounds for Catastrophic Mortality Bonds}
\author{Raj Kumari Bahl and Sotirios Sabanis}
\date{\today}
\begin{document}

\maketitle

\begin{abstract}
\pagenumbering{roman}
     In this paper, we are concerned with the valuation of Catastrophic Mortality Bonds and, in particular, we examine the case of the Swiss Re Mortality Bond 2003 as a primary example of this class of assets. This bond was the first Catastrophic Mortality Bond to be launched in the market and encapsulates the behaviour of a well-defined mortality index to generate payoffs for bondholders. Pricing these type of bonds is a challenging task and no closed form solution exists in the literature. In our approach, we express the payoff of such a bond in terms of the payoff of an Asian put option and present a new approach to derive model-independent bounds exploiting comonotonic theory as illustrated in \cite{prime1}, \cite{2} and \cite{Simon} for the pricing of Asian options. We carry out Monte Carlo simulations to estimate the bond price and illustrate the quality of the bounds.

\bigskip

\noindent {\it Keywords}: Catastrophic Mortality Bonds, model-independent bounds,  Asian options,  comonotonicity.

\bigskip

\noindent {\it JEL code}: G220; C6.

\bigskip

\noindent {\it AMS subject classifications}: Primary 91G20; secondary 	60G44.

\end{abstract}

\section{Introduction}
\pagenumbering{arabic}
    In the present day world, many financial institutions face the risk of unexpected fluctuations in human mortality and clearly, this risk has two aspects. On one side, life insurers paying death benefits will suffer an economic loss if actual rates of mortality are in excess of those expected, due to catastrophic events such as a severe outbreak of an epidemic or a major man-made or natural disaster. This side of the risk is known in the literature by the name of \emph{mortality risk}. On the other hand, pension plan sponsors, as well as insurance companies providing retirement annuities, are subject to \emph{longevity risk}, that is, the risk that people outlive their expected lifetimes. For these institutions, the longer the life-span of people, the greater the period of time over which retirement income must be paid and, hence, the larger the financial liability.

An unanticipated change in mortality rates will affect all policies in force. Therefore, as opposed to the random variations between lifetimes of individuals, it cannot be diversified away by increasing the size of the portfolio. Reinsurance is one possible solution to the problem, but its capacity is usually limited. Alternatively, the risk may be naturally hedged or reduced through balancing products. For example, an insurance company may sell life insurance to the same customers who are buying life annuities.
The resulting combination would then reduce the company's exposure to future changes in mortality, consequently permitting a reduction of capital reserves held in respect of mortality or longevity risk. This idea of compensating longevity risk by mortality risk is often referred to as \emph{natural hedging}. However, this strategy, as \cite{Cox} pointed out, may be cost prohibitive and may not be practical in some circumstances.

As a result, a natural remedy to tackle these risks has emerged in the form of what is known as mortality  securitization which manifests itself in the form of \emph{mortality-linked securities} abbreviated in the literature as \emph{MLSs}. These securities, which typically come in the form of bonds, provide a tool in the hands of insurers to transfer their mortality-sensitive exposures to a vested number of investors in the capital market, offering them a risk premium in return. Mortality-linked securities differ from their longevity counterparts in the sense that while the former have their cash flows linked to a mortality index, the latter are based upon a survivor index. For a more detailed review of the two type of bonds, one can refer to \cite{Melnick}.
In fact mortality-linked securities are also known as \emph{Extreme Mortality Bonds} or \emph{EMBs} or \emph{Catastrophe (CAT) Mortality Bonds} or \emph{CATM bonds} since they are triggered by a catastrophic evolution of death rates of one or more populations. These bonds may be appealing to the investors because of their potential of providing diversification to the portfolio. The return on these bonds generally does not bear any correlation with the return on other investments, such as fixed income or equities. From the point of view of the (re)insurer these instruments act as `Alternative Risk Transfer' (ART) mechanisms.

The pioneering MLS was the Swiss Re mortality bond (Vita I) issued in 2003 which is the prime focus of this paper. This was followed up by the EIB/BNP longevity bond issued in 2004 (\cite{Blake}; \cite{Lane}). For the former, the principal of the bond would have been reduced if there had been a catastrophic mortality event during the life of the bond, therefore allowing Swiss Re to reduce some of its exposure to extreme mortality risk. On the contrary, the latter was a 25-year longevity bond, which was intended for UK pension funds with exposures to longevity risk. This bond took the form of an annuity bond with annual coupon payments tied to the realized survival rates for some English and Welsh males. However, it did not get the same reception as the Swiss Re bond. Swiss Re followed up the success of VITA I by launching five more series of VITA bonds with the latest one being VITA VI which will cover extreme mortality events in Australia, Canada and the UK over a 5 year term from January 2016. Apart from this Swiss Re also experimented with a multi-peril bond called ``Mythen Re" which synthesized catastrophe and mortality risks, obtaining 200 million US dollars in protection for North Atlantic hurricane and UK extreme mortality risk. Many other reinsurance giants such as Scottish Re and Munich Re have also issued a score of other mortality bonds. We refer readers to \cite{Blake2008}, \cite{Coughlan}, \cite{Zhou} and \cite{Chen3} for further details. In fact it is interesting to note that Swiss Re has also launched an innovative `Longevity Trend Bond' called the Swiss Re Kortis bond in December 2010. Interested readers can refer to \cite{Chen3} and \cite{Hunt}. A more up to date list of developments connected to mortality and longevity securities and markets can be found in \cite{Tan} and \cite{Liu}. 

As an aftereffect of these innovative securities, a number of valuation approaches on MLS's have germinated.  \cite{Huang} classify the approaches into the following four, not mutually exclusive, heads:

\begin{itemize}
\item \emph{Risk-adjusted process or no-arbitrage pricing:} Under this approach, the first step is to estimate the distribution of future mortality rates under the real-world probability measure. Then the real-world distribution is transformed to its risk-neutral counterpart, on the basis of the actual prices of mortality-linked securities observed in the market. Finally, the price of a mortality-linked security
can be calculated by discounting, at the risk-free interest rate, its expected payoff under the identified risk-neutral probability measure. An important point underlying this approach is that it takes into account the actual prices. The need of market prices makes the implementation of this approach difficult. One way to effectively use the no arbitrage approach is to use a stochastic mortality model, which is, at the very beginning, defined in the real-world measure and fitted to past data. The model is then calibrated to market prices, yielding a risk-neutral mortality process from which security prices are calculated. For instance, \cite{Cairns2006} calibrate a two-factor mortality model to the price of the BNP/EIB longevity bond.
\end{itemize}

\begin{itemize}
\item \emph{The Wang transform:} It is the approach given by \cite{Wang1}, \cite{Wang2} which consists of employing a distortion operator that transforms the underlying distribution into a risk-adjusted distribution and the MLS price is the expected value under the risk-adjusted probability discounted by risk-free rate. The Wang transform was first employed for mortality-linked securities by \cite{Lin2}, and subsequently by other researchers including \cite{Dowd} and \cite{Denuit2007}. Based on the positive dependence characteristic of the mortality in catastrophe areas, \cite{Shang09} develop a pricing model for catastrophe mortality bonds with comonotonicity and a jump-difusion process. Pointing out  there is no unique risk-neutral probability in this incomplete market settings, they use the Wang transform method to price the bond. Unless a very simple mortality model is assumed, parameters in the distortion operator are not unique if we are not given sufficient market price data. For example, when \cite{Chen2009} used their extended Lee-Carter model with transitory jump effects to price a mortality bond, they were required to estimate three parameters in the Wang transform. To solve for these three parameters, Chen and Cox assumed that they were equal, but such an assumption is not easy to justify. In fact \cite{Pelsser} has questioned the Wang transform by stating that it is not a universal pricing measure for financial and insurance pricing. The Wang transform is superseded by the Esscher-Girsanov transform introduced by \cite{Goovaerts1}. Contrary to the Wang transform, the Esscher-Girsanov transform is consistent with arbitrage-free financial and insurance pricing. For more details one can refer to \cite{Goovaerts1} and \cite{Lauschagne}.
\end{itemize}

\begin{itemize}
\item \emph{Instantaneous Sharpe Ratio:} \cite{Milevsky} and \cite{Milevsky2} advocate the use of Instantaneous Sharpe Ratio for financial valuation of Mortality Risk. They assume that the company issuing a mortality-contingent claim requires compensation for this risk in the form of a pre-specified instantaneous Sharpe ratio. According to them pricing mortality contingent claims in an incomplete market via Instantaneous Sharpe Ratio has many desirable properties, which makes this method useful for pricing risks in other incomplete markets too. For more details the reader is referred to the aforesaid papers and \cite{Milevsky3} and \cite{Young7}.
\end{itemize}

%The market for insurance is incomplete; therefore, there is no unique pricing mechanism. To value contracts in this market, one must assume something about how risk is ‘priced.’ For example, one could use the principle of equivalent utility (see Zariphopoulou, 2002 for a review) or price according to some specific EMM (Blanchet-Scalliet et al., 2005) to value the risk. We use the instantaneous Sharpe ratio because of the desirable properties of the resulting value; see Sections 3.1 and 4. Also, in the limit as the number of contracts approaches infinity, the resulting value can be represented as an expectation with respect to an EMM. Because of these properties, we anticipate that our valuation methodology will prove useful in pricing risks in other incomplete markets; for example, see Bayraktar and Young (2008).

\begin{itemize}
\item \emph{The utility-based valuation:} The utility based method defines an investor's utility function and maximizes an agent's expected utility subject to wealth constraints to obtain the equilibrium price of the mortality linked security. For an elaborate discussion one can review \cite{Tsai}, \cite{Cox10}, \cite{Hainaut} and \cite{Dahl}. \cite{Cox10} and \cite{Hainaut} employ an exponential utility function to compute the price of MLSs.
\end{itemize}
Apart from the aforesaid methods \cite{Beelders} and \cite{Chen10} use the extreme value theory to measure mortality risk of the 2003 Swiss Re Bond. For an interesting summary of other methods to price MLS's one can refer to \cite{Shang}, \cite{Zhou13}, \cite{Tan} and \cite{Liu}.

The methods available in literature for the pricing of MLS's offer only a limited application due to restrictions such as availability of price information or specific utility functions. The difficulty in pricing MLS's stems from the fact that the MLS market is incomplete as the underlying mortality rates are usually untradeable in financial markets. As a result, the usual no-arbitrage pricing method can only provide a price range or a price bound, instead of a single value.

Surprisingly, mortality linked securities, apart from their present day form seem to have a long history. In the 17th and 18th centuries, so-called \emph{`tontines'}, which were named after the Neapolitan banker Lorenzo Tonti, had been offered by several governments (\cite{Weir}). Within these schemes, investors made a one-time payment, and annual dividends were distributed among the survivors. Hence, while still relying on the investor's survival, his payoffs were connected to the mortality experience among the pool of subscribers. These issues were particularly successful in France, but due to high interest payments, they soon became precarious for the crown's financial situation (see \cite{Jennings}). However,
this was not only the case with tontines; life annuities, which presented another large share of the royal debt, were also offered at highly favourable conditions from the investors' perspective. This carelessness was exploited by the Genevan entrepreneur Jacob Bouthillier Beaumont in the scheme attributed
to him (cf. \cite{Jennings}). Here, annuities were subscribed on the lives of a group of Genevan girls for the account of Genevan investors. Thus, their payoffs were directly linked to the survival of the Genevan ``madmoiselles", and due to the ``generous" assumptions of the French authorities, the schemes were initially highly profitable for the Genevans, the real victim being the French taxpayer. These speculations came to an abrupt end with the French Revolution in 1789, for which the budgetary crises caused by the careless borrowing was, undoubtedly, one major reason. Until the beginning of this century, there has not been another public issue of a mortality linked security, however, there are indications of recent private transactions resembling the tontine scheme (see \cite{Dowd}). For a more detailed overview of the history of mortality contingent securities the reader is referred to \cite{Bauer} and \cite{Luis}.

Today, all around the world, investment banks and other financial service providers are working on the idea of trading longevity risk, and the first mortality trading desks have been installed solidifying that ``betting on the time of death is set”.\footnote{The Business, 08/15/2007, ``Betting on the time of death is set”, by P. Thornton.}

This paper is concerned with finding price bounds for the Swiss Re mortality catastrophe bond by expressing its payoff in the form of an Asian put option and using the theory of comonotonicity. Such a methodology has been advocated by \cite{Simon}, \cite{2} and \cite{prime1} to find a price range for Asian options. For more details on comonotonicity and its applications, one can refer to \cite{Upper} and \cite{1}.

As the MLS market is incomplete, a unique pricing measure does not exist. However, since the market is arbitrage-free, at least one risk-neutral measure can be found, that can then be used to find fair prices of mortality contingent securities.  The existence of such a measure allows us to proceed with the fair pricing of mortality contingent securities and no matter what the choice of such a measure is, the pricing is done in a model independent way. We exploit this fact and work in a model-independent setting: that is, we do not assume that the mortality evolution process behaves according to a given model, but aim to draw conclusions that hold under any model. This is in contrast to the standard approach to pricing mortality contingent products which is to postulate a model and to determine the price of the underlying as the suitably discounted risk neutral expectation of the payoff under that model. A major problem with this approach is that no model can capture the real world behaviour of MLSs fully, thus exposing the entire procedure to model risk.

The rest of this paper is organised as follows: the next section describes the structure of the Swiss Re Bond and expresses its payoff in the form of an Asian put option. Section 3 shows derivations of the lower bound for the aforesaid bond using comonotonicity. In Section 4, we use the same to derive upper bounds for the Swiss Re Bond. In Section 5, we illustrate the computation of bounds by choosing specific models for mortality index. Section 6 portrays numerical results for the derived theory and compares the results with Monte Carlo estimates of the bond price. Appropriate figures that highlight comparisons among the bounds have also been furnished. The concluding section presents conclusions and avenues for further research.

\section{Design of the Swiss Re Bond}
As pointed out in the introduction, the financial capacity of the life insurance industry to pay catastrophic death losses from natural or man-made disasters is limited. To expand its capacity to pay catastrophic mortality losses, Swiss Re procured about 400 million in coverage from institutional investors in lieu of its first pure mortality security. The reinsurance giant issued a three year bond in December 2003 with maturity on January 1, 2007. To carry out the transaction, Swiss Re set up a special purpose vehicle (SPV) called Vita Capital Ltd. This enabled the corresponding cash flows to be kept off Swiss Re's balance sheet. The principal is subject to mortality risk which is defined in terms of an index $q_{t_{i}}$ in year $t_{i}$. This mortality index was constructed as a weighted average of mortality rates (deaths per 100,000) over age, sex (male 65\% and female 35\%) and nationality (US 70\%, UK 15\%, France 7.5\%, Italy 5\% and Switzerland 2.5\%) and is given below.
\begin{equation}\label{2.0}
q_{t_{i}} = \sum_{j}C_{j}\sum_{k}A_{k}\left(G^{m}q_{k,j,t_{i}}^{m}+G^{f}q_{k,j,t_{i}}^{f}\right)
\end{equation}
where $q_{k,j,t_{i}}^{m}$ and $q_{k,j,t_{i}}^{f}$ are the respective mortality rates (deaths per 100,000) for males and females in the age group $k$ for country $j$, $C_{j}$ is the weight attached to country $j$, $A_{k}$ is the weight attributed to age group $k$ (same for males and females) and $G^{m}$ and $G^{f}$ are the gender weights applied to males and females respectively.

The Swiss Re bond was a principal-at-risk bond. If the index $q_{t_{i}}$ ($t_{i}$ = 2004, 2005 or 2006 for $i=1,2,3$ respectively) exceeds $K_{1}$ of the actual 2002 level, $q_{0}$, then the investors will have a reduced principal payment. The following equation describes the principal loss percentage, in year $t_{i}$:
\begin{equation}\label{2.1}
L_{i}=\begin{cases}
0 & \text{if } q_{t_{i}}\leq K_{1}q_{0}\\
\frac{\left(q_{t_{i}}-K_{1}q_{0}\right)}{\left(K_{2}-K_{1}\right)q_{0}} & \text{if } K_{1}q_{0}<q_{t_{i}}\leq K_{2}q_{0}\\1 & \text{if }q_{t_{i}}> K_{2}q_{0}\end{cases}
\end{equation}
In particular, for the case of Swiss Re Bond, $K_{1}=1.3$ and $K_{2}=1.5$. In lieu of having their principal at risk, investors received quarterly coupons equal to the three-month U.S. LIBOR plus 135 basis points. There were 12 coupons in all with a coupon value of
\begin{equation}\label{2.1a}
CO_{j}=\begin{cases}
\left(\frac{SP+LI_{j}}{4}\right).C & \text{if } j=\frac{1}{4},\frac{2}{4},...,\frac{11}{4},\\
\left(\frac{SP+LI_{j}}{4}.C+X_{T}\right) & \text{if } j=3,\end{cases}
\end{equation}
where $SP$ is the spread value which is 1.35\%, $LI_{j}$ are the LIBOR rates, $C=\$400$ million, $T=t_{3}$ and $X_{T}$ is a random variable representing the proportion of the principal returned to the bondholders on the maturity date such that
\begin{equation}\label{2.2}
X_{T}=C\left(1-\sum_{i=1}^{3}L_{i}\right)^{+},
\end{equation}
where $\sum_{i=1}^{3}L_{i}$ is the aggregate loss ratio at $t_{3}$. However, there was no catastrophe during the term of the bond. The discounted cash flow (DC) of payments is given by
\begin{equation}\label{2.2a}
DC\left(r\right)=\sum_{i=1}^{12}\frac{CO_\frac{i}{4}}{\left(1+\frac{r}{4}\right)^{i}}
\end{equation}
where $r$ is the nominal annual interest rate.

Further define
\[
Y_{T}=-{\displaystyle \int_{0}^{T}}\rho\left(t\right)dt
\]
where $\rho(t)$ is the US LIBOR at time $t$. As a result, the risk-neutral value at time 0 of the random principal returned at the termination of the bond is
\[
P=\mbox{E\ensuremath{_{Q}}\ensuremath{\left[e^{-Y_{T}}X_{T}\right]}}
\]
where $Q$ is the risk-neutral measure. However, under the assumption of independence of $Y_{T}$ and $X_{T}$, this reduces to
\[
P=\mbox{E\ensuremath{_{Q}}\ensuremath{\left[e^{-Y_{T}}\right]}}\mbox{E\ensuremath{_{Q}}\ensuremath{\left[X_{T}\right]}}
\]
The conditions under which it is possible (or not) to transfer the independence assumption from the physical world measure $\mathbb{P}$ to $Q$ have been discussed extensively in \cite{Dhaene}. Henceforth, in this incomplete market, we choose to price under a risk neutral measure that preserves independence between market and mortality risks. In order to proceed, we represent $\mbox{E\ensuremath{_{Q}}\ensuremath{\left[e^{-Y_{T}}\right]}}$ as $e^{-rT}$, which implies
\begin{equation}\label{2.3}
P=e^{-rT}\mbox{E\ensuremath{_{Q}}\ensuremath{\left[X_{T}\right]}}
\end{equation}
where $r$ is the risk-free rate of interest. In subsequent writing, we drop $Q$ from the above expression.

\subsection{The Principal Payoff of Swiss Re Bond as that of an Asian-type Put Option}
In fact, we can write $X_{T}$ given in \eqref{2.2} in a more compact form similar to the payoff of the Asian put option as shown below:
\begin{equation}\label{2.4}
X_{T}=D\left({q_{0}-\displaystyle \sum_{i=1}^{3}}5\left(q_{t_{i}}-1.3q_{0}\right)^{+}\right)^{+}
\end{equation}
with
\begin{equation}\label{2.5}
D=\frac{C}{q_{0}}
\end{equation}
and the strike price equal to $q_{0}$. For the sake of simplicity, we use $q_{i}$ in place of $q_{t_{i}}$ and define
\begin{equation}\label{2.6}
S_{{i}}=5\left(q_{i}-1.3q_{0}\right)^{+}
\end{equation}
and
\begin{equation}\label{2.7}
S=\displaystyle \sum_{i=1}^{3}S_{i}
\end{equation}
Using \eqref{2.6}-\eqref{2.7} in \eqref{2.4} and plugging the result into \eqref{2.3}, we have:
\begin{equation}\label{2.8}
P=De^{-rT}\mbox{E\ensuremath{\left[\left(q_{0}-S\right)^{+}\right]}}
\end{equation}
It is naturally assumed that the inequalities $S \geq q_{0}$ almost surely (a.s.) and $S \leq q_{0}$ a.s. do not hold, otherwise the problem has a trivial solution. This means that $q_{0} \in \left(F_{S}^{-1+}\left(0\right), F_{S}^{-1}\left(1\right)\right)$, where as in \cite{1}, $F_{X}^{-1}$ is the generalized inverse of the cumulative distribution function (cdf), i.e.,
\begin{equation}\label{2.5a}
F_{X}^{-1}\left(p\right)=\inf \{x \in\mathbb{R}|F_{X}\left(x\right)\geq p\},\;\;p \in \left[0,1\right]
\end{equation}
and $F_{X}^{-1+}$ is a more sophisticated inverse defined as
\begin{equation}\label{2.5b}
F_{X}^{-1+}\left(p\right)=\sup \{x \in\mathbb{R}|F_{X}\left(x\right)\leq p\},\;\;p \in \left[0,1\right].
\end{equation}
Our interest lies in the calculation of reasonable bounds for $P$. We invoke Jensen's inequality for computing the lower bounds and present our findings in the subsequent sections. We exploit this inequality twice and note that in order to maintain uniformity of having a convex function at each step, it is beneficial to consider the call counterpart of the payoff of Swiss Re Bond rather than \eqref{2.8}. We nomenclate this payoff as $P_{1}$, i.e., we have
\begin{equation}\label{2.9}
P_{1}=De^{-rT}\mbox{E\ensuremath{\left[\left(S-q_{0}\right)^{+}\right]}}
\end{equation}
We then exploit the put-call parity for Asian options to achieve the bounds for the payoff in question.

\subsection{Put-Call Parity for the Swiss Re Bond}
We now derive the put-call parity relationship for the Swiss Re Bond. For any real number $a$, we have:
\begin{equation}\label{2.10}
\left(a\right)^{+}-\left(-a\right)^{+}=a
\end{equation}
So we obtain
\[
e^{-rT}\left(\sum_{i=1}^{3}S_{i}-q_{0}\right)^{+}-e^{-rT}\left(q_{0}-\sum_{i=1}^{3}S_{i}\right)^{+}=e^{-rT}\left(\sum_{i=1}^{3}S_{i}-q_{0}\right).
\]
On taking expectations on both sides, we obtain
\[
e^{-rT}\mbox{E}\left[\left(\sum_{i=1}^{3}S_{i}-q_{0}\right)^{+}\right]-e^{-rT}\mbox{E}\left[\left(q_{0}-\sum_{i=1}^{3}S_{i}\right)^{+}\right]=e^{-rT}\mbox{E}\left[\sum_{i=1}^{3}S_{i}-q_{0}\right].
\]
Finally, on multiplying by $D$ and expanding the definition of $S_{i}$, we have
\[
P_{1}-P=De^{-rT}\mbox{E}\left[\sum_{i=1}^{3}5\left(q_{i}-1.3q_{0}\right)^{+}-q_{0}\right]
\]
\begin{equation}\label{2.11}
\Rightarrow P_{1}-P=De^{-rT}\left[5\sum_{i=1}^{3}e^{rt_{i}}C\left(1.3q_{0},t_{i}\right)-q_{0}\right],
\end{equation}
where $C\left(K,t_{i}\right)$ denotes the price of a European call on the mortality index with strike $K$, maturity $t_{i}$ and current mortality value $q_{0}$. This option would be in-the-money if the mortality index is more than $1.3q_{0}$ which is the trigger level of Swiss Re bond. Clearly, such instruments are not available for trading in the market at present. But a more complete life market is in the making and we feel such securities will soon be introduced (c.f. \cite{Blake3} and \cite{Blake2008}). The pay-off structures, i.e. the design of the issued securities and the mortality contingent payments should be developed to appear attractive to investors and the re-insurer. Although, the Swiss Re bond was fully subscribed and press reports highlight that investors were quite satisfied with it (e.g. \emph{Euroweek}, 19 December 2003), the market for mortality linked securities still needs innovations such as vanilla options on mortality index to provide flexible hedging solutions. Investors of the Swiss Re bond included a large number of pension funds as they could view this bond as a powerful hedging instrument. The underlying mortality risk associated with the bond is correlated with the mortality risk of the active members of a pension plan. If a catastrophe occurs, the reduction in the principal would be offset by reduction in pension liability of these pension funds. Moreover, the bond offers a considerably higher return than similarly rated floating rate securities (c.f. \cite{Blake}). In a manner similar to \cite{Bauer}, we feel the success of the life market hinges upon flexibility. As a result, such option-type structures enable re-insurer to keep most of the capital while at the same time being hedged against catastrophic mortality situation. \cite{Cox3} present an interesting note on the trigger level of $1.3q_{0}$ in context of 2004 tsunami in Asia and Africa. A mortality option of the above type would become extremely useful in such a case. \cite{Tsai} and \cite{Cheng} decompose the terminal payoff of the Swiss Re bond into two call options.

Equation \eqref{2.11} gives the required put-call parity relation between the Swiss Re mortality bound and its call counterpart. Define
\begin{equation}\label{2.13}
G=De^{-rT}\left[5\sum_{i=1}^{3}e^{rt_{i}}C\left(1.3q_{0},t_{i}\right)-q_{0}\right].
\end{equation}
Clearly, if we bound $P_{1}$ by bounds $l_{1}$ and $u_{1}$, then the corresponding bounds for the Swiss Re mortality bond are as follows
\begin{equation}\label{2.14}
\left(l_{1}-G\right)^{+} \leq P \leq \left(u_{1}-G\right)^{+}.
\end{equation}

\section{Lower Bounds for the Swiss Re Bond}
We now proceed to work out appropriate lower bounds for the terminal value of the principal paid in the Swiss Re Bond. For this, we first calculate bounds for the following Asian-type call option
\begin{equation}\label{4.1.1}
P_{1}=De^{-rT}\mbox{E}\ensuremath{\left[\left({\displaystyle \sum_{i=1}^{n}}S_{i}-q_{0}\right)^{+}\right]}
\end{equation}
with $T=t_{n}$ and $n=3$. The interval $\left[0,T\right]$ consists of the monitoring times $t_{1}, t_{2},...,t_{n-1}$. The undercurrent of the theory presented in this section is the paper by \cite{prime1}. In an attempt to estimate the value of the Asian call option, the authors derive four lower bounds namely trivial, $LB_{1}$, $LB_{t}^{\left(1\right)}$ and $LB_{t}^{\left(2\right)}$, which are sharper in increasing order in sense of their proximity to the actual value of the Asian call. The underlying assumption they make in deriving these bounds is that European call prices with arbitrary strikes and maturities are available in the market. Although, as our previous discussion indicates, such securities with the underlying as the mortality index have not appeared on the horizon as yet, but would be indispensable for the development of a complete life market. The first step towards designing of such securities is the need for a benchmark longevity index. The formation of Life and Longevity Markets Association (LLMA) in 2010 was an important milestone in this direction. The LLMA promotes the development of a liquid trading market in longevity and mortality-related risk, of the type that exists for Insurance Linked Securities (ILS) and other large trend risks like interest rates and inflation. There have been a few mortality indices created by various parties but we still lack a benchmark. \cite{Menioux} throws light on various longevity indices.

Invoking Jensen's inequality and conditioning on an arbitrary random variable $\Lambda$, we have
\begin{eqnarray}\label{4.1.3}
\mbox{E}\ensuremath{\left[\left({\displaystyle \sum_{i=1}^{n}}S_{i}-q_{0}\right)^{+}\right]}  & \geq & \mbox{E}\ensuremath{\left[\left(5{\displaystyle \sum_{i=1}^{n}}\left(\mbox{E}\left(q_{i}|\Lambda\right)-1.3q_{0}\right)^{+}-q_{0}\right)^{+}\right]}.
\end{eqnarray}
The general derivation concerning lower bounds for stop loss premium of a sum of random variables based on Jensen's inequality can be found in \cite{Simon} and for its application to arithmetic Asian options, one can refer to \cite{2}. We now define
\begin{equation}\label{4.1.4}
Z_{i}=5\left(\mbox{E}\left(q_{i}|\Lambda\right)-1.3q_{0}\right)^{+}; i=1,2,...,n.
\end{equation}
As a result in \eqref{4.1.3}, we have obtained
\begin{equation}\label{4.1.5}
\mbox{E}\ensuremath{\left[\left({\displaystyle \sum_{i=1}^{n}}S_{i}-q_{0}\right)^{+}\right]} \geq \mbox{E}\ensuremath{\left[\left({\displaystyle \sum_{i=1}^{n}}Z_{i}-q_{0}\right)^{+}\right]}.
\end{equation}

On investigating the relationship between $\mbox{E}\ensuremath{\left[{\displaystyle \sum_{i=1}^{n}}S_{i}\right]}$ and $\mbox{E}\ensuremath{\left[{\displaystyle \sum_{i=1}^{n}}Z_{i}\right]}$, we find that
\begin{eqnarray}\label{4.1.6}
\mbox{E}\ensuremath{\left[{\displaystyle \sum_{i=1}^{n}}S_{i}\right]}
   & \geq & \mbox{E}\ensuremath{\left[{\displaystyle \sum_{i=1}^{n}}Z_{i}\right]}.
\end{eqnarray}
On lines of \eqref{2.7}, define
\begin{equation}\label{4.1.8}
Z=\displaystyle \sum_{i=1}^{n}Z_{i}
\end{equation}
so that we can rewrite \eqref{4.1.5} as
\begin{equation}\label{4.1.10}
\mbox{E\ensuremath{\left[\left(S-q_{0}\right)^{+}\right]}}\geq \mbox{E\ensuremath{\left[\left(Z-q_{0}\right)^{+}\right]}}.
\end{equation}
In fact, the two sides of the inequality in \eqref{4.1.10} are essentially the stop-loss premiums of $S$ and $Z$. Thus, we have obtained
\begin{equation}\label{4.1.11}
S \geq_{sl} Z
\end{equation}
or
\[
S\geq_{\mbox{sl }}{\displaystyle \sum_{i=1}^{n}}\left(\mbox{E}\left(q_{i}|\Lambda\right)-1.3q_{0}\right)^{+}.
\]
Now, suitably tailoring the inequality \eqref{4.1.10} to suit our need of the Asian-type call option by multiplying by the discount factor at time $T$, we obtain
\begin{equation}\label{4.1.12}
P_{1}\geq De^{-rT}\mbox{E}\ensuremath{\left[\left({\displaystyle \sum_{i=1}^{n}}5\left(\mbox{E}\left(q_{i}|\Lambda\right)-1.3q_{0}\right)^{+}-q_{0}\right)^{+}\right].}
\end{equation}
To exploit the theory of comonotonicity see for example in \cite{1}, we now have to show that the lower bound for $S$, can be formulated as a sum of stop-loss premiums. This task becomes trivial if we can choose the conditioning variable $\Lambda$ in such a way that $\mbox{E}\left(q_{i}|\Lambda\right)$ is either increasing or decreasing for every $i$, so that the vector: $\textbf{q}{}^{\textbf{l}}=\left(\mbox{E}\ensuremath{\left(q_{1}|\Lambda\right)},\ldots,\mbox{E}\ensuremath{\left(q_{n}|\Lambda\right)}\right)$ is comonotonic. This automatically implies that the vector: $\textbf{Z}{}^{\textbf{l}}=\left(Z_{1},\ldots,Z_{n}\right)$ is comonotonic. From this point onwards, we assume that $q_{0} \in \left(F_{Z}^{-1+}\left(0\right), F_{Z}^{-1}\left(1\right)\right)$ which is not at all a restriction for all practical purposes as pointed out in section 2.1. As a result on using comonotonicity, we have
\begin{eqnarray}\label{4.1.15}
\mbox{E}\ensuremath{\left[\left(S-q_{0}\right)^{+}\right]} & \geq & {\displaystyle \sum_{i=1}^{n}}\mbox{E}\ensuremath{\left[\left(Z_{i}-F_{Z_{i}}^{-1}\left(F_{Z}\left(q_{0}\right)\right)\right)^{+}\right]}\nonumber\\
& {} & {} \;\;\;\;\;\;-\left(q_{0}-F_{Z}^{-1}\left(F_{Z}\left(q_{0}\right)\right)\right)\left(1-F_{Z}\left(q_{0}\right)\right).
\end{eqnarray}
In case if the marginal cdfs $F_{Z_{i}}$ are strictly increasing, we have the following compact expression
\begin{eqnarray}\label{4.1.15b}
\mbox{E}\ensuremath{\left[\left(S-q_{0}\right)^{+}\right]} & \geq & {\displaystyle \sum_{i=1}^{n}}\mbox{E}\ensuremath{\left[\left(Z_{i}-F_{Z_{i}}^{-1}\left(F_{Z}\left(q_{0}\right)\right)\right)^{+}\right]},\nonumber\\
& {} & {} \;\;\;\;\;\;\;\;\;\;q_{0} \in \left(F_{Z}^{-1+}\left(0\right), F_{Z}^{-1}\left(1\right)\right).
\end{eqnarray}
Note from \eqref{4.1.4} and \eqref{4.1.8} that the $Z_{i}'s$ and subsequently $Z$ are non-negative and this automatically implies $q_{0} \geq 0$. Further, by the definition of cdf, we have
\begin{equation}\label{4.1.16}
F_{Z}\left(q_{0}\right)=\textbf{P}\left[Z\leq q_{0}\right]=\textbf{P}\left[{\displaystyle \sum_{j=1}^{n}Z_{j}}\leq q_{0}\right]=\textbf{P}\left[{\displaystyle \sum_{j=1}^{n}5\left(\mbox{E}\left(q_{j}|\Lambda\right)-1.3q_{0}\right)^{+}}\leq q_{0}\right].
\end{equation}
Thus we have been able to obtain a stop-loss lower bound for $S=\sum_{i=1}^{n}S_{i}$ by conditioning on an arbitrary random variable $\Lambda$, i.e.,
\begin{equation}\label{4.1.17}
P_{1}\geq De^{-rT}{\displaystyle \sum_{i=1}^{n}}\mbox{E}\ensuremath{\left[\left(5\left(\mbox{E}\left(q_{i}|\Lambda\right)-1.3q_{0}\right)^{+}-F_{Z_{i}}^{-1}\left(F_{Z}\left(q_{0}\right)\right)\right)^{+}\right]}-K_{1},
\end{equation}
where
\begin{equation}\label{4.1.17a}
K_{1}=De^{-rT}\left(q_{0}-F_{Z}^{-1}\left(F_{Z}\left(q_{0}\right)\right)\right)\left(1-F_{Z}\left(q_{0}\right)\right).
\end{equation}

\subsection{The Trivial Lower Bound}
In case, if the random variable $\Lambda$ is independent of the mortality evolution $\left\{ q_{t}\right\} _{t\geq0}$, the bound in \eqref{4.1.12} simply reduces to:\begin{equation}\label{4.2.1}
P_{1}\geq De^{-rT}\mbox{E}\ensuremath{\left[\left({\displaystyle \sum_{i=1}^{n}}5\left(\mbox{E}\left(q_{i}\right)-1.3q_{0}\right)^{+}-q_{0}\right)^{+}\right]}
\end{equation}
or even more precisely as the outer expectation is redundant
\begin{equation}\label{4.2.2}
P_{1}\geq De^{-rT}\ensuremath{\left({\displaystyle \sum_{i=1}^{n}}5\left(\mbox{E}\left(q_{i}\right)-1.3q_{0}\right)^{+}-q_{0}\right)^{+}.}
\end{equation}
%Under the Black-Scholes set up, under $Q$-dynamics, the mortality evolution price process is given as
%\begin{equation}\label{4.2.2a}
%q_{t}=q_{0}\exp\left[\sigma W_{t}^{*}+\left(r-\frac{1}{2}\sigma^{2}\right)t\right],
%\end{equation}
%where $\left\{ W_{t}^{*}\right\} _{t\geq0}$ denotes a standard Brownian motion so that $W_{t}^{*}\sim N%\left(0,t\right)$. Clearly,
%\[
%\textbf{E}_{Q}\left[q_{t}\right]=q_{0}\exp\left[\left(r-\frac{1}{2}\sigma^{2}\right)t\right]\textbf{E}_{Q}\left[\exp\left[\sigma W_{t}^{*}\right]\right].
%\]
%If $X\sim N\left(\mu,\sigma^{2}\right)$, then the Moment Generating Function (m.g.f.) of X is given as
%\begin{equation}\label{4.2.3}
%M_{X}\left(t\right)=\exp\left(\mu t+\frac{1}{2}\sigma^{2}t^{2}\right)
%\end{equation}
%and so we obtain
Under the assumption of the existence of an Equivalent Martingale Measure (EMM), Q, the discounted mortality process is a martingale, so that
\begin{equation}\label{4.2.4}
\mbox{E}\left[q_{t}\right]=q_{0}e^{rt}.
\end{equation}
If we substitute this in equation \eqref{4.2.2}, we obtain a very rough lower bound for the Asian-type call option
\begin{equation}\label{4.2.5}
P_{1}\geq Ce^{-rT}\left({\displaystyle \sum_{i=1}^{n}5\left(e^{rt_{i}}-1.3\right)^{+}}-1\right)^{+}=:\mbox{ lb}_{0}.
\end{equation}
In the light of put-call parity derived in section 2, the trivial lower bound for the Swiss Re mortality bond is given as
\begin{equation}\label{4.2.6}
P\geq \left(\mbox{ lb}_{0}-G\right)^{+}=:\mbox{ SWLB}_{0}.
\end{equation}
where G is defined in \eqref{2.13}.

\subsection{The Lower Bound $\mbox{SWLB}_{1}$}
To improve upon the trivial lower bound, we choose $\Lambda=q_{1}$ in \eqref{4.1.17}. Using the martingale argument for the discounted mortality process
\[
\mbox{E}\left[q_{i}|q_{1}\right]=\mbox{E}\left[e^{rt_{i}}e^{-rt_{i}}q_{i}|q_{1}\right]=e^{r\left(t_{i}-t_{1}\right)}q_{1}
\]
and so from \eqref{4.1.4}
\begin{equation}\label{4.3.2a}
Z_{i}=5\left(e^{r\left(t_{i}-t_{1}\right)}q_{1}-1.3q_{0}\right)^{+}; i=1,2,...,n.
\end{equation}
Then the random vector  $\textbf{q}{}^{\textbf{l}}=\left(q_{1},e^{r\left(t_{2}-t_{1}\right)}q_{1},\ldots,e^{r\left(t_{n}-t_{1}\right)}q_{1}\right)$ is comonotone. Equation \eqref{4.1.17} then reduces to
\begin{equation}\label{4.3.2}
P_{1}\geq De^{-rT}{\displaystyle \sum_{i=1}^{n}}\mbox{E}\ensuremath{\left[\left(5\left(e^{r\left(t_{i}-t_{1}\right)}q_{1}-1.3q_{0}\right)^{+}-F_{Z_{i}}^{-1}\left(F_{Z}\left(q_{0}\right)\right)\right)^{+}\right]}-K_{1},
\end{equation}
where $K_{1}$ is given in \eqref{4.1.17a} and by the definition of cdf, we have
\[
F_{Z}\left(q_{0}\right)=\textbf{P}\left[Z\leq q_{0}\right]=\textbf{P}\left[{\displaystyle \sum_{j=1}^{n}}5\left(e^{r\left(t_{j}-t_{1}\right)}q_{1}-1.3q_{0}\right)^{+}\leq q_{0}\right]
\]
\[
\Rightarrow F_{Z}\left(q_{0}\right)=\textbf{P}\left[{\displaystyle \sum_{j=1}^{n}}5\left(e^{r\left(t_{j}-t_{1}\right)}\frac{q_{1}}{q_{0}}-1.3\right)^{+}\leq 1\right].
\]

Now, as the left hand side of the inequality within the probability is an increasing function in $q_{1}/q_{0}$, we have that $Z\leq q_{0}$ if and only if $q_{1}\leq xq_{0}$, where we substitute $x$ for $q_{1}/q_{0}$ in the above probability and obtain its value by solving
\begin{equation}\label{4.3.2.1}
{\displaystyle \sum_{i=1}^{n}}\left(e^{r\left(t_{i}-t_{1}\right)}x-1.3\right)^{+}=0.2.
\end{equation}

As a result, we have
\begin{equation}\label{4.3.2.2}
F_{Z}\left(q_{0}\right)=F_{q_{1}}\left(xq_{0}\right)=F_{Z_{i}}\left(5q_{0}\left(e^{r\left(t_{i}-t_{1}\right)}x-1.3\right)^{+}\right)=F_{Z_{i}}\left(k_{i}\right)\;\;\forall i
\end{equation}
where
\begin{equation}\label{4.3.2.2a}
k_{i}=5q_{0}\left(e^{r\left(t_{i}-t_{1}\right)}x-1.3\right)^{+}.
\end{equation}
Plugging \eqref{4.3.2.2} into \eqref{4.3.2}, and noting that $Z_{i}'s$ are non-negative, we have
{
\allowdisplaybreaks
\begin{eqnarray}\label{4.3.3}
P_{1} & \geq & 5De^{-rT}{\displaystyle \sum_{i=1}^{n}}\mbox{E}\ensuremath{\left[\left(\left(e^{r\left(t_{i}-t_{1}\right)}q_{1}-1.3q_{0}\right)^{+}-\frac{1}{5}F_{Z_{i}}^{-1}\left(F_{Z_{i}}\left(k_{i}\right)\right)\right)^{+}\right]}-K_{1}
\nonumber \\
   & = &5D{\displaystyle \sum_{i=1}^{n}}e^{-r\left(T-t_{i}\right)}C\left(\frac{q_{0}}{e^{r\left(t_{i}-t_{1}\right)}}\left(1.3+\frac{1}{5q_{0}}F_{Z_{i}}^{-1}\left(F_{Z_{i}}\left(k_{i}\right)\right)\right),\; t_{1}\right)-K_{1}
\nonumber \\
& =: & \,\mbox{lb}_{1}.
\end{eqnarray}
}where $k_{i}$ is defined in \eqref{4.3.2.2a} and $q_{0} \geq 0$ and $C\left(K,t_{1}\right)$ denotes the price of a European call on the mortality index with strike K, maturity $t_{1}$ and current mortality index $q_{0}$. The function $\mbox{lb}_{1}$ provides a lower bound for the Asian-type call option in terms of European calls at each of the times such that these contracts have maturity at $t_{1}$ and the strike given by the expression $\frac{q_{0}}{e^{r\left(t_{i}-t_{1}\right)}}\left(1.3+\frac{1}{5q_{0}}F_{Z_{i}}^{-1}\left(F_{Z_{i}}\left(5q_{0}\left(e^{r\left(t_{i}-t_{1}\right)}x-1.3\right)^{+}\right)\right)\right)$ at the $i$th time point. This bound holds for any arbitrage-free market model and is a significant improvement over the trivial bound given in \eqref{4.2.5}. Invoking the put-call parity derived in section 2, the corresponding lower bound for the Swiss Re mortality bond is given as
\begin{equation}\label{4.3.8}
P\geq \left(\mbox{ lb}_{1}-G\right)^{+}=:\mbox{ SWLB}_{1},
\end{equation}
where G is defined in \eqref{2.13}. In case if the marginal cdfs $F_{Z_{i}}$ are strictly increasing, we have
\begin{equation}\label{4.3.8a}
\mbox{lb}_{1}=5D{\displaystyle \sum_{i=1}^{n}}e^{-r\left(T-t_{i}\right)}C\left(q_{0}\max\left(x,\,\frac{1.3}{e^{r\left(t_{i}-t_{1}\right)}}\right),\; t_{1}\right).
\end{equation}

\subsection{A Model-independent Lower Bound}
As the next step, we suggest that the bound $\mbox{ SWLB}_{1}$ can be improved by imposing the following additional assumption
\begin{equation}\label{4.5.1}
{\displaystyle \sum_{i=1}^{n}q_{i}\geq_{sl}\left(\sum_{i=1}^{j-1}q_{0}^{\left(1-t_{i}/t\right)}q_{t}^{t_{i}/t}+{\displaystyle \sum_{i=j}^{n}}e^{r\left(t_{i}-t\right)}q_{t}\right)}
\end{equation}
for $0\leq t\leq T$ and $j=\min\left\{ i\,:\, t_{i}\geq t\right\}$. This assumption is in the spirit of the equation 11 in \cite{prime1}. It can be shown that \eqref{4.5.1} holds good for stationary exponential L\`{e}vy models with mortality evolution $q_{t}=q_{0}e^{X_{t}}$, where $\left(X_{t}\right)_{t\geq0}$ is a L\`{e}vy process.

Clearly,
{
\allowdisplaybreaks
\begin{eqnarray}\label{4.5.5}
{\displaystyle \sum_{i=1}^{n}}5\left(\mbox{E}\left(q_{i}|q_{t}\right)-1.3q_{0}\right)^{+} & = & \sum_{i=1}^{j-1}5\left(\mbox{E}\left(q_{i}|q_{t}\right)-1.3q_{0}\right)^{+}\nonumber\\
& {} & {}+{\displaystyle \sum_{i=j}^{n}}5\left(\mbox{E}\left(q_{i}|q_{t}\right)-1.3q_{0}\right)^{+}
\nonumber\\  & \geq & \sum_{i=1}^{j-1}5q_{0}\left(\left(\frac{q_{t}}{q_{0}}\right)^{t_{i}/t}-1.3\right)^{+}\nonumber\\
& {} & {} +{\displaystyle \sum_{i=j}^{n}}5q_{0}\left(\frac{q_{t}}{q_{0}}e^{r\left(t_{i}-t\right)}-1.3\right)^{+}
\nonumber\\
& =: & S^{l_{2}}.
\end{eqnarray}
}

Evidently, $S^{l_{2}}$ is the same as $Z$ in \eqref{4.1.8} with $\Lambda$ being replaced by $q_{t}$ and thus from \eqref{4.1.11}, we have
\begin{equation}\label{4.5.6}
S \geq_{sl} S^{l_{2}}
\end{equation}

As before, let $j=\min\left\{ i\,:\, t_{i}\geq t\right\}$. Consider the components of $S^{l_{2}}$ in equation \eqref{4.5.5} and define $\textbf{Y}=\left(Y_{1},\ldots,Y_{n}\right)$, where
\[
Y_{i}=\begin{cases}
5q_{0}\left(\left(\frac{q_{t}}{q_{0}}\right)^{t_{i}/t}-1.3\right)^{+} & i<j\\
5q_{0}\left(\left(\frac{q_{t}}{q_{0}}\right)e^{r\left(t_{i}-t\right)}-1.3\right)^{+} & i\geq j
\end{cases}
\]
$i=1,2,...,n$.
Clearly, \textbf{Y} is comonotonic since its components are strictly increasing functions of a single variable $q_{t}$. So, the stop-loss transform of $S^{l_{2}}$ can be written as the sum of stop-loss transform of its components (see for example in \cite{1}), i.e.,
\begin{eqnarray}\label{4.5.7}
\mbox{E}\ensuremath{\left[\left(S^{l_{2}}-q_{0}\right)^{+}\right] & = & {\displaystyle \sum_{i=1}^{n}}\mbox{E}\ensuremath{\left[\left(Y_{i}-F_{Y_{i}}^{-1}\left(F_{S^{l_{2}}}\left(q_{0}\right)\right)\right)^{+}\right]}}\nonumber\\
& {} & {}-\left(q_{0}-F_{S^{l_{2}}}^{-1}\left(F_{S^{l_{2}}}\left(q_{0}\right)\right)\right)\left(1-F_{S^{l_{2}}}\left(q_{0}\right)\right)
\end{eqnarray}
where as before it is natural that $q_{0} \in \left(F_{S^{l_{2}}}^{-1+}\left(0\right), F_{S^{l_{2}}}^{-1}\left(1\right)\right)$ and $F_{S^{l_{2}}}\left(q_{0}\right)$ is the distribution function of $S^{l_{2}}$ evaluated at $q_{0}$ such that for an arbitrary $t$, we have:
\begin{eqnarray*}
F_{S^{l_{2}}}\left(q_{0}\right) & = & \textbf{P}\left[S^{l_{2}}\leq q_{0}\right]
\nonumber\\
& = & \textbf{P}\left[\sum_{i=1}^{j-1}\left(\left(\frac{q_{t}}{q_{0}}\right)^{t_{i}/t}-1.3\right)^{+}+{\displaystyle \sum_{i=j}^{n}}\left(\left(\frac{q_{t}}{q_{0}}\right)e^{r\left(t_{i}-t\right)}-1.3\right)^{+}\leq 0.2\right].
\nonumber
\end{eqnarray*}

Clearly, $S^{l_{2}}\leq q_{0}$ if and only if $q_{t}\leq xq_{0}$, where we substitute $x$ for $q_{t}/q_{0}$ in the above expression and obtain its value by solving:
\begin{equation}\label{4.5.9}
\sum_{i=1}^{j-1}\left(x^{t_{i}/t}-1.3\right)^{+}+{\displaystyle \sum_{i=j}^{n}}\left(xe^{r\left(t_{i}-t\right)}-1.3\right)^{+}= 0.2.
\end{equation}
As a result, we have:
\begin{equation}\label{4.5.9a}
F_{S^{l_{2}}}\left(q_{0}\right)=F_{q_{t}}\left(xq_{0}\right)=\begin{cases}
F_{Y_{i}}\left(5q_{0}\left(x^{t_{i}/t}-1.3\right)^{+}\right)=F_{Y_{i}}\left(l_{i}\right)\: & i<j\\
F_{Y_{i}}\left(5q_{0}\left(xe^{r\left(t_{i}-t\right)}-1.3\right)^{+}\right)=F_{Y_{i}}\left(m_{i}\right)\: & i\geq j
\end{cases}
\end{equation}
where
\begin{equation}\label{4.5.9b}
l_{i}=5q_{0}\left(x^{t_{i}/t}-1.3\right)^{+}; i<j
\end{equation}
and
\begin{equation}\label{4.5.9c}
m_{i}=5q_{0}\left(xe^{r\left(t_{i}-t\right)}-1.3\right)^{+}; i>j.
\end{equation}
Using this result in equation \eqref{4.5.7} and recalling the definition of the Asian-type call option given in \eqref{4.1.1} along with the stop-loss order relationship between $S$ and $S^{l_{2}}$ as given by equation \eqref{4.5.6} and noting that $Y_{i}'s$ are non-negative, we obtain,
{
\allowdisplaybreaks
\begin{eqnarray}\label{4.5.10}
P_{1} & \geq & De^{-rT}\left({\displaystyle \sum_{i=1}^{n}}\mbox{E}\ensuremath{\left[\left(Y_{i}-F_{Y_{i}}^{-1}\left(F_{S^{l_{2}}}\left(q_{0}\right)\right)\right)^{+}\right]}\right)-K_{2}
\nonumber\\
& = & 5De^{-rT}\Bigg(\sum_{i=1}^{j-1}q_{0}^{1-t_{i}/t}\mbox{E}\left[\left(q_{t}^{t_{i}/t}-q_{0}^{t_{i}/t}\left(1.3+\frac{1}{5q_{0}}F_{Y_{i}}^{-1}\left(F_{Y_{i}}\left(l_{i}\right)\right)\right)\right)^{+}\right] \nonumber\\
& {} {} & {} \;\;\;\;\;\;\;\;+{\displaystyle \sum_{i=j}^{n}}e^{rt_{i}}C\left(\frac{q_{0}}{e^{r\left(t_{i}-t\right)}}\left(1.3+\frac{1}{5q_{0}}F_{Y_{i}}^{-1}\left(F_{Y_{i}}\left(m_{i}\right)\right)\right),\; t\right)\Bigg)-K_{2} \nonumber\\
& =: & \,\mbox{lb}_{t}^{\left(2\right)}
\end{eqnarray}
}
where $l_{i}$ and $m_{i}$ are defined in \eqref{4.5.9b} and \eqref{4.5.9c} respectively and
\begin{equation}\label{4.5.10a}
K_{2}=De^{-rT}\left(q_{0}-F_{S^{l_{2}}}^{-1}\left(F_{S^{l_{2}}}\left(q_{0}\right)\right)\right)\left(1-F_{S^{l_{2}}}\left(q_{0}\right)\right)
\end{equation}

In fact, $\mbox{lb}_{t}^{\left(2\right)}$ is a lower bound for all $t$ and so it can be maximized with respect to $t$ to yield the optimal lower bound as given below:
\begin{equation}\label{4.5.19}
P_{1}\geq\max_{0\leq t\leq T}\mbox{lb}_{t}^{\left(2\right)}.
\end{equation}

On choosing $t=t_{1}$ implies $j=1$ and so equation \eqref{4.5.9} reduces to \eqref{4.3.2.1} and we obtain
\begin{equation}
\mbox{lb}_{1}^{\left(2\right)}=\mbox{lb}_{1}.
\end{equation}
As a result we have
\[
\max_{0\leq t\leq T}\mbox{lb}_{t}^{\left(2\right)} \geq \mbox{lb}_{1}.
\]
Clearly, once again, as in the previous sections, we have
\begin{equation}\label{4.5.20}
P\geq \left(\mbox{lb}_{t}^{\left(2\right)}-G\right)^{+}=:\mbox{SWLB}_{t}^{\left(2\right)},
\end{equation}
where G is defined in \eqref{2.13}. In case if the marginal cdfs $F_{Y_{i}}$ are strictly increasing, we have
\begin{eqnarray}\label{4.5.20a}
\mbox{lb}_{t}^{\left(2\right)} & = & 5De^{-rT}\Bigg(\sum_{i=1}^{j-1}q_{0}^{1-t_{i}/t}\mbox{E}\left[\left(q_{t}^{t_{i}/t}-q_{0}^{t_{i}/t}\max\left(x^{t_{i}/t},\,1.3\right)\right)^{+}\right] \nonumber\\
& {} {} & {} \;\;\;\;\;\;\;\;+{\displaystyle \sum_{i=j}^{n}}e^{rt_{i}}C\left(q_{0}\max\left(x,\,\frac{1.3}{e^{r\left(t_{i}-t\right)}}\right),\; t\right)\Bigg).
\end{eqnarray}

We now move on to the derivation of an upper bound for the price of Swiss Re bond in the next section.

\section{Upper Bounds for the Swiss Re Bond}
We derive a couple of upper bounds for the Swiss Re bond.
\subsection{A First Upper Bound}
This section will focus on finding an upper bound for the bond in question by using comonotonicity theory in a manner similar to \cite{Upper} and \cite{Dhaene2}. Define the comonotonic counterpart of $\textbf{q}{}=\left(q_{1},...,q_{n}\right)$ as $\textbf{q}{}^{\textbf{u}}=\left(F_{S_{1}}^{-1}\left(U\right),...,F_{S_{n}}^{-1}\left(U\right)\right)$
where $U \sim U\left(0,1\right)$. Further define
\begin{equation}\label{4.1.1a}
S^{c}={\displaystyle \sum_{i=1}^{n}}F_{S_{i}}^{-1}\left(U\right)={\displaystyle \sum_{i=1}^{n}}S_{i}^{c}.
\end{equation}
Clearly,
\begin{equation}\label{4.1.2a}
S \leq_{cx} S^{c}
\end{equation}
where $cx$ denotes convex ordering (see for example in \cite{1}). In other words,
\begin{equation}\label{4.1.3a}
\mbox{E}\ensuremath{\left[\left({\displaystyle \sum_{i=1}^{n}}S_{i}-q_{0}\right)^{+}\right]} \leq \mbox{E}\ensuremath{\left[\left({\displaystyle \sum_{i=1}^{n}}S_{i}^{c}-q_{0}\right)^{+}\right]}
\end{equation}
and we have
\begin{eqnarray}\label{4.1.3ab}
\mbox{E}\ensuremath{\left[\left({\displaystyle \sum_{i=1}^{n}}S_{i}^{c}-q_{0}\right)^{+}\right]} & = &  {\displaystyle \sum_{i=1}^{n}}\mbox{E}\ensuremath{\left[\left(S_{i}-F_{S_{i}}^{-1}\left(F_{S^{c}}\left(q_{0}\right)\right)\right)^{+}\right]}\nonumber\\
& {} & -\left(q_{0}-F_{S^{c}}^{-1}\left(F_{S^{c}}\left(q_{0}\right)\right)\right)\left(1-F_{S^{c}}\left(q_{0}\right)\right)
\end{eqnarray}
where it is understood that $q_{0} \in \left(F_{S^{c}}^{-1+}\left(0\right), F_{S^{c}}^{-1}\left(1\right)\right)$. As a result, an upper bound for the call counterpart of the Swiss Re bond is given as
{
\allowdisplaybreaks
\begin{eqnarray}\label{4.1.4a}
P_{1} & \leq & De^{-rT}{\displaystyle \sum_{i=1}^{n}}\mbox{E}\ensuremath{\left[\ensuremath{\left(S_{i}-F_{S_{i}}^{-1}\left(F_{S^{c}}\left(q_{0}\right)\right)\right)^{+}}\right]}-K_{3}\nonumber\\
    & = & 5De^{-rT}{\displaystyle \sum_{i=1}^{n}e^{rt_{i}}C\left(1.3q_{0}+\frac{F_{S_{i}}^{-1}\left(F_{S^{c}}\left(q_{0}\right)\right)}{5},t_{i}\right)}-K_{3},
\end{eqnarray}
}
where
\begin{equation}\label{4.1.4ab}
K_{3}=De^{-rT}\left(q_{0}-F_{S^{c}}^{-1}\left(F_{S^{c}}\left(q_{0}\right)\right)\right)\left(1-F_{S^{c}}\left(q_{0}\right)\right).
\end{equation}
As a result we can write the upper bound given above as
\begin{equation}\label{4.1.5a}
P_{1} \leq 5De^{-rT}{\displaystyle \sum_{i=1}^{n}e^{rt_{i}}C\left(1.3q_{0}+\frac{F_{S_{i}}^{-1}\left(x\right)}{5},t_{i}\right)}-K_{3},
\end{equation}
where $x\in\left(0,1\right)$ is the solution of the equation
\begin{equation}\label{4.1.6a}
{\displaystyle \sum_{i=1}^{n}F_{S_{i}}^{-1}\left(x\right)=q_{0}}.
\end{equation}
We now seek to express the inverse distribution function of $S_{i}$ in terms of that of $q_{i}$. Let
\begin{equation}\label{4.1.7a}
y_{i} = F_{S_{i}}^{-1}\left(x\right);\;y_{i} \geq 0
\end{equation}
\begin{eqnarray}\label{4.1.8a}
\Rightarrow x & = & F_{S_{i}}\left(y_{i}\right) \nonumber\\
& = & P\left[5\left(q_{i}-1.3q_{0}\right)^{+} \leq y_{i}\right] \nonumber\\
& = & F_{q_{i}}\left(1.3q_{0}+\frac{y_{i}}{5}\right).
\end{eqnarray}
\begin{equation}\label{4.1.9a}
\Rightarrow y_{i} = 5\left(F_{q_{i}}^{-1}\left(x\right)-1.3q_{0}\right).
\end{equation}

From equations \eqref{4.1.5a}, \eqref{4.1.7a} and \eqref{4.1.9a}, we conclude that the upper bound is given as
\begin{equation}\label{4.1.10a}
P_{1}\leq 5De^{-rT}{\displaystyle \sum_{i=1}^{n}e^{rt_{i}}C\left(F_{q_{i}}^{-1}\left(x\right),t_{i}\right)}-K_{3}=: \mbox{ub}_{1}.
\end{equation}
where using equations \eqref{4.1.6a} and \eqref{4.1.9a}, we see that $x$ solves the following equation
\begin{equation}\label{4.1.11a}
{\displaystyle \sum_{i=1}^{n}F_{q_{i}}^{-1}\left(x\right)=\frac{q_{0}}{5}\left(1+6.5n\right)}.
\end{equation}

As in the case of lower bounds, invoking the put-call parity of section 2, we have for the Swiss Re bond
\begin{equation}\label{5.10}
P\leq \left(\mbox{ub}_{1}-G\right)^{+}=:\mbox{SWUB}_{1}.
\end{equation}
where G is defined in \eqref{2.13}. In case if the marginal cdfs $F_{S_{i}}$ are strictly increasing, we have
\begin{equation}\label{5.10ab}
\mbox{ub}_{1} = 5De^{-rT}{\displaystyle \sum_{i=1}^{n}e^{rt_{i}}C\left(F_{q_{i}}^{-1}\left(x\right),t_{i}\right)}
\end{equation}
where $x$ solves the equation \eqref{4.1.11a}.

\subsection{An Improved Upper Bound by conditioning}
We now seek to obtain a sharper upper bound for the Swiss Re bond. This is possible if we assume that some additional information is available concerning the stochastic nature of $\left(q_{1},q_{2},...,q_{n}\right)$. That is, if we can find a random variable $\Lambda$, with a known distribution, such that the individual conditional distributions of $q_{i}$ given the event $\Lambda=\lambda$ are known for all $i$ and all possible values of $\lambda$. Such an approach can be found in \cite{Upper}, \cite{1}, \cite{2}, \cite{Laeven} and \cite{Goovaerts21}.

Define
\begin{equation}\label{4.22}
S^{u}={\displaystyle \sum_{i=1}^{n}}F_{S_{i}|\Lambda}^{-1}\left(U\right)={\displaystyle \sum_{i=1}^{n}}S_{i}^{u},
\end{equation}
where $U \sim U\left(0,1\right)$. Then we have
\begin{equation}\label{4.21}
S \leq_{cx} S^{u} \leq_{cx} S^{c},
\end{equation}
Now let $\textbf{q}{}^{\textbf{u}}=\left(S_{1}^{u},...,S_{n}^{u}\right)$. Since $\left(F_{S_{1}|\Lambda=\lambda}^{-1},...,F_{S_{n}|\Lambda=\lambda}^{-1}\right)$ is comonotonic, we have,
\begin{equation}\label{4.23}
F_{S^{u}|\Lambda=\lambda}^{-1}\left(p\right)={\displaystyle \sum_{i=1}^{n}}F_{S_{i}|\Lambda=\lambda}^{-1}\left(p\right),\;p \in \left(0,1\right).
\end{equation}
It follows that, in this case
\begin{equation}\label{4.24}
{\displaystyle \sum_{i=1}^{n}}F_{S_{i}|\Lambda=\lambda}^{-1}\left(F_{S^{u}|\Lambda=\lambda}\left(q_{0}\right)\right)=q_{0}.
\end{equation}
and so we have
\begin{eqnarray}\label{4.25}
f\left(\lambda\right) & = & \mbox{E}\ensuremath{\left[\left({\displaystyle \sum_{i=1}^{n}}S_{i}^{u}-q_{0}\right)^{+}\middle|\Lambda=\lambda\right]} \nonumber\\
& = & {\displaystyle \sum_{i=1}^{n}}\mbox{E}\ensuremath{\left[\left(S_{i}-F_{S_{i}|\Lambda=\lambda}^{-1}\left(F_{S^{u}|\Lambda=\lambda}\left(q_{0}\right)\right)\right)^{+}\middle|\Lambda=\lambda\right]}\nonumber \\
& & -\left(q_{0}-F_{S^{u}|\Lambda=\lambda}^{-1}\left(F_{S^{u}|\Lambda=\lambda}\left(q_{0}\right)\right)\right)\left(1-F_{S^{u}|\Lambda=\lambda}\left(q_{0}\right)\right),
\end{eqnarray}
where it is clear that $q_{0} \in \left(F_{S^{u}|\Lambda=\lambda}^{-1+}\left(0\right), F_{S^{u}|\Lambda=\lambda}^{-1}\left(1\right)\right)$. By applying the tower property and using the convex order relationship given by \eqref{4.21}, we obtain an upper bound for the call counterpart of the Swiss Re bond, i.e.,
{
\allowdisplaybreaks
\begin{eqnarray}\label{4.26}
P_{1} & \leq & De^{-rT}\mbox{E}\ensuremath{\left[\left(S^{u}-q_{0}\right)^{+}\right]}\nonumber \\
   & = & De^{-rT}\mbox{E}\ensuremath{\left[f\left(\lambda\right)\right]}\nonumber \\
   & = & De^{-rT}{\displaystyle \sum_{i=1}^{n}}{\displaystyle \int_{-\infty}^{\infty}}\mbox{E}\ensuremath{\left[\ensuremath{\left(S_{i}-F_{S_{i}|\Lambda=\lambda}^{-1}\left(F_{S^{u}|\Lambda=\lambda}\left(q_{0}\right)\right)\right)^{+}\middle|\Lambda=\lambda}\right]}dF_{\Lambda}\left(\lambda\right)-K_{4} \nonumber\\
& = & 5De^{-rT}I_{1}-K_{4}\nonumber
\end{eqnarray}
}
where
\begin{equation}
I_{1}={\displaystyle \sum_{i=1}^{n}}{\displaystyle \int_{-\infty}^{\infty}}\mbox{E}\ensuremath{\left[\left(q_{i}-\left(1.3q_{0}+\frac{F_{S_{i}|\Lambda=\lambda}^{-1}\left(F_{S^{u}|\Lambda=\lambda}\left(q_{0}\right)\right)}{5}\right)\right)^{+}\middle|\Lambda=\lambda\right]}dF_{\Lambda}\left(\lambda\right) \nonumber
\end{equation}
and
\begin{equation}\label{4.26a}
K_{4}=\left(q_{0}-F_{S^{u}|\Lambda=\lambda}^{-1}\left(F_{S^{u}|\Lambda=\lambda}\left(q_{0}\right)\right)\right)\left(1-F_{S^{u}|\Lambda=\lambda}\left(q_{0}\right)\right).
\end{equation}
Given the event $\Lambda=\lambda$, let $x$ be the solution to the following equation
\begin{equation}\label{4.272}
{\displaystyle \sum_{i=1}^{n}}F_{S_{i}|\Lambda=\lambda}^{-1}\left(x\right)=q_{0}.
\end{equation}
Further, we see from equation \eqref{4.24}, that $x=F_{S^{u}|\Lambda=\lambda}\left(q_{0}\right)$. It therefore follows, as a result of equation 93 of \cite{1} that an upper bound for the call counterpart of the Swiss Re bond is given as
\begin{equation}\label{4.28}
P_{1} \leq 5De^{-rT}{\displaystyle \sum_{i=1}^{n}}{\displaystyle \int_{-\infty}^{\infty}}\mbox{E}\ensuremath{\left[\left(q_{i}-\left(1.3q_{0}+\frac{F_{S_{i}|\Lambda=\lambda}^{-1}\left(x\right)}{5}\right)\right)^{+}\middle|\Lambda=\lambda\right]}dF_{\Lambda}\left(\lambda\right)-K_{4},
\end{equation}
where $x$ is obtained by solving \eqref{4.272}. Moreover, it is straightforward to write
\begin{equation}\label{4.28a}
F_{S_{i}|\Lambda=\lambda}^{-1}\left(x\right) = 5\left(F_{q_{i}|\Lambda=\lambda}^{-1}\left(x\right)-1.3q_{0}\right).
\end{equation}
As a result, the upper bound can be rewritten as
\begin{equation}\label{4.28b}
P_{1} \leq 5De^{-rT}{\displaystyle \sum_{i=1}^{n}}{\displaystyle \int_{-\infty}^{\infty}}\mbox{E}\ensuremath{\left[\left(q_{i}-F_{q_{i}|\Lambda=\lambda}^{-1}\left(x\right)\right)^{+}\middle|\Lambda=\lambda\right]}dF_{\Lambda}\left(\lambda\right)-K_{4}=:\mbox{ub}_{t}^{\left(1\right)},
\end{equation}
where $x \in \left(0,1\right)$ can be obtained by solving the equation
\begin{equation}\label{4.28c}
{\displaystyle \sum_{i=1}^{n}}F_{q_{i}|\Lambda=\lambda}^{-1}\left(x\right)=\frac{q_{0}}{5}\left(1+6.5n\right).
\end{equation}

Since this is is an upper bound for all $t$, it follows that we can find the optimal upper bound by minimising equation \eqref{4.28b} over $t\in \left[0,T\right]$. As before, invoking the put-call parity of section 2, we have for the Swiss Re bond
\begin{equation}\label{4.29}
P \leq \left(\mbox{ub}_{t}^{\left(1\right)}-G\right)^{+}=:\mbox{SWUB}_{t}^{\left(1\right)},
\end{equation}
where G is defined in \eqref{2.13}. As remarked earlier, this bound improves upon the unconditional bound given by \eqref{5.10}. In case if the marginal cdfs $F_{S_{i}|\Lambda}$ are strictly increasing, one can put $K_{4}=0$ in \eqref{4.28b} to obtain the upper bound.

\section{Examples}
We now derive lower and upper bounds by choosing specific models for the mortality index.
\subsection{Black-Scholes Model}
Let us consider the case where the mortality evolution process $\left\{ q_{t}\right\}_{t\geq0}$ follows the Black-Scholes model (c.f. \cite{Black}) which we write as $q_{t}=e^{U_{t}}$, where $\left\{ U_{t}\right\} _{t\geq0}$ is defined as:
\begin{equation}\label{4.6.1}
U_{t}=\log_{e}\left(q_{0}\right)+\left(r-\frac{\sigma^{2}}{2}\right)t+\sigma W_{t}^{*},
\end{equation}
where $\left\{ W_{t}^{*}\right\} _{t\geq0}$ denotes a standard Brownian motion so that $W_{t}^{*}\sim N\left(0,t\right)$. As a result
\begin{equation}\label{4.6.2}
U_{t}\sim N\left(\log_{e}q_{0}+\left(r-\frac{\sigma^{2}}{2}\right)t,\,\sigma^{2}t\right).
\end{equation}
We now derive lower and upper bounds for this model on the lines of $\mbox{SWLB}_{t}^{\left(2\right)}$ and $\mbox{SWUB}_{t}^{\left(1\right)}$ respectively.

\subsubsection{The Lower Bound $\mbox{SWLB}_{t}^{\left(BS\right)}$}

We know that if $\left(X,\, Y\right)\sim\mbox{BVN}\left(\mu_{X},\mu_{Y},\sigma_{X}^{2},\sigma_{Y}^{2},\rho\right)$ where $BVN$ stands for bivariate normal distribution, the conditional distribution of the lognormal random variable $e^{X}$, given the event $e^{Y}=y$ is given as
\begin{equation}\label{4.6.3}
F_{e^{X}|e^{Y}=y}\left(x\right)=\Phi\left(\frac{\log_{e}x-\left(\mu_{X}+\rho\frac{\sigma_{X}}{\sigma_{Y}}\left(\log_{e}y-\mu_{Y}\right)\right)}{\sigma_{X}\sqrt{1-\rho^{2}}}\right).
\end{equation}
where $\Phi$ denotes the c.d.f. of standard normal distribution. Given the time points $t_{i}$, $t$ for each $i$, let $\rho$ be the correlation between $U_{t_{i}}$ and $U_{t}$. Then, from \eqref{4.6.2}, it is evident that: $\left(U_{t_{i}},U_{t}\right)\sim\mbox{BVN}\left(\mu_{U_{t_{i}}},\mu_{U_{t}},\sigma_{U_{t_{i}}}^{2},\sigma_{U_{t}}^{2},\rho\right)$, where the same equation specifies $\mu_{U_{t_{i}}},\mu_{U_{t}},\sigma_{U_{t_{i}}}^{2}$ and $\sigma_{U_{t}}^{2}$. Also as $q_{t}=e^{U_{t}}$, we have from equation \eqref{4.6.3} that the distribution function of $q_{i}$ conditional on the event $q_{t}=s_{t}$ is given as
\[
F_{q_{i}|q_{t}=s_{t}}\left(x\right)=\Phi\left(a\left(x\right)\right)
\]
where $a\left(x\right)$ is given by
\begin{equation}\label{4.6.4}
a\left(x\right)=\frac{\log_{e}x-\left(\log\left(q_{0}\left(\frac{s_{t}}{q_{0}}\right)^{\rho\sqrt{\frac{t_{i}}{t}}}\right)+\left(r-\frac{\sigma^{2}}{2}\right)\left(t_{i}-\rho\sqrt{t_{i}t}\right)\right)}{\sigma\sqrt{t_{i}\left(1-\rho^{2}\right)}}.
\end{equation}
As the differentiation of c.d.f. yields the p.d.f., therefore the conditional density function of $q_{i}$ given $q_{t}=s_{t}$ satisfies the following equation:
\begin{equation}\label{4.6.5}
f_{q_{i}|q_{t}=s_{t}}\left(x\right)=\frac{1}{x\sigma\sqrt{t_{i}\left(1-\rho^{2}\right)}}\phi\left(a\left(x\right)\right),
\end{equation}
where $\phi$ denotes the p.d.f. of standard normal distribution. 

Under the assumption that the mortality evolution process $\left\{ q_{t}\right\} _{t\geq0}$ is defined as $q_{t}=e^{U_{t}}$ where $U_{t}$ is given in equation \eqref{4.6.1}, the conditional expectation of $q_{i}$ given $q_{t}$ is given by the expression
\begin{equation}\label{4.6.6}
\mbox{E}\left(q_{i}|q_{t}\right)=\begin{cases}
q_{0}\left(\frac{q_{t}}{q_{0}}\right)^{\frac{t_{i}}{t}}e^{\frac{\sigma^{2}t_{i}}{2t}\left(t-t_{i}\right)}\;\;\; & t_{i}<t,\\
q_{t}e^{r\left(t_{i}-t\right)} & t_{i}\geq t.
\end{cases}
\end{equation}
We utilize this expression to obtain a lower bound for Asian call option under the Black-Scholes setting. Define: $S^{l_{3}}=\sum_{i=1}^{n}Y_{i}$, where exploiting \eqref{4.6.6}, under the Black-Scholes case, $Y_{i}$, $i=1,2,...,n$ are given by
\[
Y_{i}=\begin{cases}
5q_{0}\left(\left(\frac{q_{t}}{q_{0}}\right)^{t_{i}/t}e^{\frac{\sigma^{2}t_{i}}{2t}\left(t-t_{i}\right)}-1.3\right)^{+}\;\; & i<j\\
5q_{0}\left(\left(\frac{q_{t}}{q_{0}}\right)e^{r\left(t_{i}-t\right)}-1.3\right)^{+} & i\geq j
\end{cases}
\]
Evidently, $\textbf{Y}=\left(Y_{1},\ldots,Y_{n}\right)$ is comonotonic and so we have
\begin{equation}\label{4.6.8}
\mbox{E}\ensuremath{\left[\left(S^{l_{3}}-q_{0}\right)^{+}\right]={\displaystyle \sum_{i=1}^{n}}\mbox{E}\ensuremath{\left[\left(Y_{i}-F_{Y_{i}}^{-1}\left(F_{S^{l_{3}}}\left(q_{0}\right)\right)\right)^{+}\right]}},
\end{equation}
where $F_{S^{l_{3}}}\left(q_{0}\right)$ is the distribution function of $S^{l_{3}}$ evaluated at $q_{0}$. For an arbitrary t, we have
%\[
%F_{S^{l_{3}}}\left(q_{0}\right) = \textbf{P}\left[S^{l_{3}}\leq q_{0}\right]
%\]
%However, to make computations easier for deriving bounds for the Asian-type call option for the Swiss
%Re mortality bond 2003, we modify the distribution function term in \eqref{4.6.8}. For this, we define:
%$\textbf{A}=\left(A_{1},\ldots,A_{n}\right)$, where
%\[
%A_{i}=\begin{cases}
%q_{0}^{\left(1-t_{i}/t\right)}q_{t}^{t_{i}/t}e^{\frac{\sigma^{2}t_{i}}{2t}\left(t-t_{i}\right)}\;\; & i<j\\
%q_{t}e^{r\left(t_{i}-t\right)} & i\geq j
%\end{cases}
%\]
%$i=1,2,...,n$. Again, \textbf{A} is comonotonic since its components are strictly increasing functions of a single variable $q_{t}$. We further define:
%\[
%S^{l'_{3}}={\displaystyle \sum_{i=1}^{n}A_{i}}
%\]
%Now, in light of these notations, we modify \eqref{4.5.6} as follows:
%\begin{equation}\label{4.6.9}
%\textbf{E}\ensuremath{\left[\left(S^{l_{3}}-q_{0}\right)^{+}\right]={\displaystyle \sum_{i=1}^{n}}\textbf{E}\ensuremath{\left[\left(Y_{i}-F_{A_{i}}^{-1}\left(F_{S^{l'_{3}}}\left(q_{0}\right)\right)\right)^{+}\right]}}
%\end{equation}
%where $F_{S^{l'_{3}}}\left(q_{0}\right)$ is the distribution function of $S^{l'_{3}}$ evaluated at $q_{0}$ such that for an arbitrary $t$, we have:
\begin{eqnarray}\label{4.6.9}
F_{S^{l_{3}}}\left(q_{0}\right) & = & \textbf{P}\left[S^{l_{3}}\leq q_{0}\right]
\nonumber\\
& = & \textbf{P}\Bigg[\sum_{i=1}^{j-1}5q_{0}\left(\left(\frac{q_{t}}{q_{0}}\right)^{t_{i}/t}e^{\frac{\sigma^{2}t_{i}}{2t}\left(t-t_{i}\right)}-1.3\right)^{+}\nonumber\\
& & {}+{\displaystyle \sum_{i=j}^{n}}5q_{0}\left(\left(\frac{q_{t}}{q_{0}}\right)e^{r\left(t_{i}-t\right)}-1.3\right)^{+}\leq q_{0}\Bigg]
\nonumber\\
& = & \textbf{P}\Bigg[\sum_{i=1}^{j-1}\left(\left(\frac{q_{t}}{q_{0}}\right)^{t_{i}/t}e^{\frac{\sigma^{2}t_{i}}{2t}\left(t-t_{i}\right)}-1.3\right)^{+}\nonumber\\
& & {} +{\displaystyle \sum_{i=j}^{n}}\left(\left(\frac{q_{t}}{q_{0}}\right)e^{r\left(t_{i}-t\right)}-1.3\right)^{+}\leq 0.2\Bigg].
\end{eqnarray}
As in the previous section, we substitute $x$ for $q_{t}/q_{0}$ and solve for $x$, using the equation:
\begin{equation}\label{4.6.10}
\sum_{i=1}^{j-1}\left(x^{t_{i}/t}e^{\frac{\sigma^{2}t_{i}}{2t}\left(t-t_{i}\right)}-1.3\right)^{+}+{\displaystyle \sum_{i=j}^{n}}\left(xe^{r\left(t_{i}-t\right)}-1.3\right)^{+}=0.2.
\end{equation}
This is indeed straight forward, noting that the left hand side of this equation is strictly increasing in $x$. This yields:
\[
F_{S^{l_{3}}}\left(q_{0}\right)=F_{q_{t}}\left(xq_{0}\right)=\begin{cases}
F_{Y_{i}}\left(5q_{0}\left(x^{t_{i}/t}e^{\frac{\sigma^{2}t_{i}}{2t}\left(t-t_{i}\right)}-1.3\right)^{+}\right)\: & i<j,\\
F_{Y_{i}}\left(5q_{0}\left(xe^{r\left(t_{i}-t\right)}-1.3\right)^{+}\right)\: & i\geq j.
\end{cases}
\]
Substituting this in equation \eqref{4.6.8}, recalling the stop-loss order relationship between $S$ and $S^{l_{2}}$ as given by equation \eqref{4.5.6}, applying it for $S^{l_{3}}$, splitting the terms and multiplying by the averaged discount factor as done in the last section and noting that the marginal cdfs $F_{Y_{i}}$ are strictly increasing, we obtain
{
\allowdisplaybreaks
\begin{eqnarray}\label{4.6.11}
P_{1} & \geq & De^{-rT}\left({\displaystyle \sum_{i=1}^{n}}\mbox{E}\ensuremath{\left[\left(Y_{i}-F_{Y_{i}}^{-1}\left(F_{S^{l_{3}}}\left(q_{0}\right)\right)\right)^{+}\right]}\right)
\nonumber\\
& = & 5De^{-rT}\Bigg(\sum_{i=1}^{j-1}q_{0}^{1-t_{i}/t}\mbox{E}\left[\left(q_{t}^{t_{i}/t}e^{\frac{\sigma^{2}t_{i}}{2t}\left(t-t_{i}\right)}-q_{0}^{t_{i}/t}\max\left(x^{t_{i}/t}e^{\frac{\sigma^{2}t_{i}}{2t}\left(t-t_{i}\right)},\,1.3\right)\right)^{+}\right] \nonumber\\
& {} {} & {} \;\;\;\;\;\;\;\;+{\displaystyle \sum_{i=j}^{n}}e^{rt_{i}}C\left(q_{0}\max\left(x,\,\frac{1.3}{e^{r\left(t_{i}-t\right)}}\right),\, t\right)\Bigg).
\end{eqnarray}
}
We denote the term within the first summation as $E_{1}$ and its value is given below.
\begin{equation}\label{4.6.12}
\mbox{E}_{1}=5q_{0}\left(e^{rt_{i}}\Phi\left(d_{1ai}\right)-\max\left(x^{t_{i}/t}e^{\frac{\sigma^{2}t_{i}}{2t}\left(t-t_{i}\right)},\,1.3\right)\Phi\left(d_{2ai}\right)\right),
\end{equation}
where $d_{2ai}$ and $d_{1ai}$ are given respectively as
\begin{equation}\label{4.6.13}
d_{2ai}=\frac{-\log_{e}\left(\frac{da_{i}}{q_{0}}\right)+\left(r-\frac{\sigma^{2}}{2}\right)t}{\sigma\sqrt{t}}
\end{equation}
\begin{equation}\label{4.6.14}
d_{1ai}=d_{2ai}+\sigma\frac{t_{i}}{\sqrt{t}}
\end{equation}
and $da_{i}$ is given as
\begin{equation}\label{4.6.15}
da_{i}=q_{0}\left(\max\left(x^{t_{i}/t},\,\frac{1.3}{e^{\frac{\sigma^{2}t_{i}}{2t}\left(t-t_{i}\right)}}\right)\right)^{t/t_{i}}.
\end{equation}
Inserting \eqref{4.6.12} in \eqref{4.6.11}, we achieve the lower bound $\mbox{lb}_{t}^{\left(BS\right)}$ as follows
\begin{eqnarray}\label{4.6.16}
P_{1} & \geq & 5De^{-rT}\Bigg(\sum_{i=1}^{j-1}q_{0}\left(e^{rt_{i}}\Phi\left(d_{1ai}\right)-\max\left(x^{t_{i}/t}e^{\frac{\sigma^{2}t_{i}}{2t}\left(t-t_{i}\right)},\,1.3\right)\Phi\left(d_{2ai}\right)\right) \nonumber\\
& & {} \;\;\;\;\;\;\;\;\;+{\displaystyle \sum_{i=j}^{n}}e^{rt_{i}}C\left(q_{0}\max\left(x,\,\frac{1.3}{e^{r\left(t_{i}-t\right)}}\right),\, t\right)\Bigg) \nonumber\\
& =: & \,\mbox{lb}_{t}^{\left(BS\right)}.
\end{eqnarray}
The bound $\mbox{lb}_{t}^{\left(BS\right)}$ can undergo treatment similar to $\mbox{lb}_{t}^{\left(2\right)}$ in sense of maximization with respect to $t$ yielding
\begin{equation}\label{4.6.17}
P_{1}\geq\max_{0\leq t\leq T}\mbox{lb}_{t}^{\left(BS\right)}.
\end{equation}
An interesting comment in the passing is that as we calculate $\mbox{E}\left[q_{i}|q_{t}\right]$ explicitly, rather than finding a lower bound for it, clearly $\mbox{lb}_{t}^{\left(BS\right)}$ improves on $\mbox{lb}_{t}^{\left(2\right)}$ in the case where $\left\{ q_{t}\right\}$ follows the Black-Scholes model. Again, as before, exploiting the put-call parity,
\begin{equation}\label{4.6.18}
P\geq \left(\mbox{lb}_{t}^{\left(BS\right)}-G\right)^{+}=:\mbox{SWLB}_{t}^{\left(BS\right)},
\end{equation}
where G is defined in \eqref{2.13}.

\subsubsection{The Upper Bound $\mbox{SWUB}_{t}^{\left(BS\right)}$}
In section 4.2, we have shown that the upper bound $\mbox{SWUB}_{1}$ can be improved by assuming that there exists a random variable $\Lambda$ such that $\text{Cov}\left(X_{i}, \Lambda\right) \neq 0\;\forall i$. Suppose this assumption is true here and the mortality index $\left\{ q_{t}\right\}_{t\geq0}$ depends on an underlying standard Brownian motion $\{ W_{t}\}_{t \in \left[0,T\right]}$. Then, from equation \eqref{4.28b} noting that the marginal cdfs $F_{q_{i}|W_t=w}$ are strictly increasing so that $K_{4}=0$, we see that an upper bound for the call counterpart of the Swiss Re bond is given as
\begin{equation}\label{5.2.1}
P_{1} \leq 5De^{-rT}{\displaystyle \sum_{i=1}^{n}}{\displaystyle \int_{-\infty}^{\infty}}\mbox{E}\ensuremath{\left[\ensuremath{\left(q_{i}-F_{q_{i}|W_t=w}^{-1}\left(x\right)\right)^{+}\middle|W_t=w}\right]}d\Phi\left(\frac{w}{\sqrt{t}}\right),
\end{equation}
where using \eqref{4.28c}, we see that $x$ is obtained by solving the following equation
\begin{equation}\label{5.2.2}
{\displaystyle \sum_{i=1}^{n}}F_{q_{i}|W_t=w}^{-1}\left(x\right)=\frac{q_{0}}{5}\left(1+6.5n\right).
\end{equation}

An explicit formula for the conditional inverse distribution function of $q_{i}$ given the event $W_t=w$, is provided by the following result.
\begin{prop}\label{prop2}
Under the assumptions of the Black-Scholes model, conditional on the event $W_t=w$, the conditional distribution function of $q_{i}$ is given by
\begin{equation}\label{5.2.3}
F_{q_{i}|W_t=w}^{-1}=\begin{cases}
q_{0}e^{\left(r-\frac{\sigma^{2}}{2}\right)t_{i}+\sigma\frac{t_{i}}{t}w+\sigma\sqrt{\frac{t_{i}}{t}\left(t-t_{i}\right)}\Phi^{-1}\left(x\right)}\;\;\; & i<j,\\
q_{0}e^{\left(r-\frac{\sigma^{2}}{2}\right)t_{i}+\sigma w+\sigma\sqrt{\left(t_{i}-t\right)}\Phi^{-1}\left(x\right)} & i\geq j.
\end{cases}
\end{equation}
where $j = min\{i: t_{i} \geq t\}$.
\end{prop}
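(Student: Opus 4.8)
The plan is to reduce the entire statement to a Gaussian conditioning computation. Under the Black--Scholes specification \eqref{4.6.1} each $q_i = q_{t_i} = e^{U_{t_i}}$ is a strictly increasing transform of the normal variable $U_{t_i} = \log_e q_0 + (r - \sigma^2/2)t_i + \sigma W_{t_i}$, where $W$ is the driving Brownian motion (written $W^{*}$ in \eqref{4.6.1}). Since $x \mapsto e^{x}$ is strictly increasing and continuous, the generalized inverse commutes with it, and for any normal law $N(\mu, s^2)$ one has $F_{e^{N(\mu,s^2)}}^{-1}(x) = e^{\mu + s\,\Phi^{-1}(x)}$. Hence it suffices to identify, for each $i$, the conditional law of $U_{t_i}$ given $W_t = w$, which is entirely determined by the conditional law of $W_{t_i}$ given $W_t = w$. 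I would split the argument into the two cases $t_i \geq t$ (i.e.\ $i \geq j$) and $t_i < t$ (i.e.\ $i < j$), matching the two branches of \eqref{5.2.3}; note $j = \min\{i : t_i \geq t\}$.

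For $i \geq j$ I would invoke the independent-increments property of Brownian motion: writing $W_{t_i} = W_t + (W_{t_i} - W_t)$, the increment $W_{t_i} - W_t \sim N(0, t_i - t)$ is independent of $W_t$, so conditioning on $W_t = w$ gives $W_{t_i} \mid W_t = w \sim N(w, t_i - t)$ and therefore $U_{t_i} \mid W_t = w \sim N\!\big(\log_e q_0 + (r - \sigma^2/2)t_i + \sigma w,\ \sigma^2(t_i - t)\big)$. Applying the lognormal inverse formula recorded above returns exactly the second branch of \eqref{5.2.3}.

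The main computation, and the step I expect to require the most care, is the case $i < j$, where I must condition a Brownian motion on a \emph{future} value. Here $(W_{t_i}, W_t)$ is jointly centered Gaussian with $\mathrm{Var}(W_{t_i}) = t_i$, $\mathrm{Var}(W_t) = t$, and $\mathrm{Cov}(W_{t_i}, W_t) = t_i$ because $t_i < t$; the standard Gaussian conditioning formula then yields the Brownian-bridge law $W_{t_i} \mid W_t = w \sim N\!\big(\tfrac{t_i}{t}w,\ \tfrac{t_i}{t}(t - t_i)\big)$. Substituting this into $U_{t_i}$ and invoking the lognormal inverse formula once more produces the first branch of \eqref{5.2.3}, with the drift term $\sigma\tfrac{t_i}{t}w$ and the quantile term $\sigma\sqrt{\tfrac{t_i}{t}(t - t_i)}\,\Phi^{-1}(x)$. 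The only real subtlety is keeping the bridge covariance straight; everything else is bookkeeping. As a sanity check, the two branches glue consistently at $t_i = t$, where both conditional variances vanish and both conditional means reduce to $\log_e q_0 + (r - \sigma^2/2)t_i + \sigma w$, so the two expressions in \eqref{5.2.3} agree.
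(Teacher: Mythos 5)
Your proposal is correct and is essentially the paper's own argument: both proofs reduce the claim to the bivariate Gaussian conditioning of $W_{t_i}$ on $W_t=w$ and then transfer to $q_i=q_0e^{(r-\sigma^2/2)t_i+\sigma W_{t_i}}$ via the monotone exponential map, the paper doing this in one stroke by specializing its lognormal conditional law \eqref{4.6.3} with $X=\sigma W_{t_i}$, $Y=W_t$ and correlation $\rho=\sqrt{(t_i\wedge t)/(t_i\vee t)}$. Your case split --- independent increments for $t_i\geq t$, the Brownian-bridge conditional law $N\bigl(\tfrac{t_i}{t}w,\tfrac{t_i}{t}(t-t_i)\bigr)$ for $t_i<t$ --- recovers exactly the same conditional means and variances that the unified correlation formula produces, so the difference is purely presentational.
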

\begin{proof}
Let us set $X = \sigma W_{t_{i}}$, $Y=W_{t}$ and $y=e^{w}$ in \eqref{4.6.3}. Then we obtain the following expression for the conditional distribution function of $e^{\sigma W_{t_{i}}}$ given the event $W_{t}=w$.
\begin{equation}\label{5.2.4}
F_{e^{\sigma W_{t_{i}}}|W_{t}=w}\left(s\right)=\Phi\left(\frac{\log_{e}s-\rho\sigma\sqrt{\frac{t_{i}}{t}}w}{\sigma\sqrt{t_{i}\left(1-\rho^{2}\right)}}\right).
\end{equation}
It then follows that $F_{e^{\sigma W_{t_{i}}}|W_{t}=w}\left(s\right)=x$ if and only if
\[s=F^{-1}_{e^{\sigma W_{t_{i}}}|W_{t}=w}\left(x\right)=e^{\rho\sigma\sqrt{\frac{t_{i}}{t}}w+\sigma\sqrt{t_{i}\left(1-\rho^{2}\right)}\Phi^{-1}\left(x\right)}\]
We can then obtain equation \eqref{5.2.3} by noting that $\rho=\sqrt{\left(t_{i}\wedge t\right)\left(t_{i}\vee t\right)}$ and the following expression for the inverse conditional distribution function of $q_{i}$ given $W_{t}=w$.
\[
F_{q_{i}|W_t=w}^{-1}=q_{0}e^{\left(r-\frac{\sigma^{2}}{2}\right)t_{i}}F^{-1}_{e^{\sigma W_{t_{i}}}|W_{t}=w}
\]
This completes the proof.
\end{proof}

It is of note that $F_{q_{i}|W_t=w}^{-1}$ is continuous when $t=t_{i}$ (that is if, for some $i$, we have $i=j$). From equation \eqref{5.2.2}, we then wish to solve the following for $x$.
\begin{equation}\label{5.2.5}
\sum_{i=1}^{j-1}e^{\left(r-\frac{\sigma^{2}}{2}\right)t_{i}+\sigma\frac{t_{i}}{t}w+\sigma\sqrt{\frac{t_{i}}{t}\left(t-t_{i}\right)}\Phi^{-1}\left(x\right)}+{\displaystyle \sum_{i=j}^{n}}e^{\left(r-\frac{\sigma^{2}}{2}\right)t_{i}+\sigma w+\sigma\sqrt{\left(t_{i}-t\right)}\Phi^{-1}\left(x\right)}=0.2+1.3n.
\end{equation}
As a result, using equation\eqref{5.2.1}, the improved upper bound for the call counterpart of the Swiss Re bond in the Black-Scholes case is given by the following set of equations
\begin{eqnarray}\label{5.2.5a}
P_{1} & \leq & 5Ce^{-rT}{\displaystyle \int_{-\infty}^{\infty}}\Bigg(\sum_{i=1}^{n}e^{\left(r-\frac{\sigma^{2}\left(t_{i}\wedge t\right)^{2}}{2t_{i}t}\right)t_{i}+\sigma\frac{t_{i}\wedge t}{t}w}\Phi\left(c_{1}^{\left(i\right)}\right)-\left(0.2+1.3n\right)\left(1-x\right)\Bigg)d\Phi\left(\frac{w}{\sqrt{t}}\right) \nonumber\\
& =: & \,\mbox{ub}_{t}^{\left(BS\right)},
\end{eqnarray}

%\begin{eqnarray}\label{5.2.6}
%P_{1} & \leq & 5De^{-rT}{\displaystyle \int_{-\infty}^{\infty}}\Bigg(\sum_{i=1}^{n}q_{0}e^{\left(r-\frac{\sigma^{2}\left(t_{i}\wedge t\right)^{2}}{2t_{i}t}\right)t_{i}+\sigma\frac{t_{i}\wedge t}{t}w}\Phi\left(c_{1}^{\left(i\right)}\right) \nonumber\\
%& & {} \;\;\;\;\;\;\;\;\;-q_{0}\left(1-x\right)\Bigg)d\Phi\left(\frac{w}{\sqrt{t}}\right) =: \;\mbox{ub}_{t}^{\left(BS\right)},
%\end{eqnarray}
\begin{equation}\label{5.2.6}
c_{1}^{\left(i\right)}=\begin{cases}
\sigma\sqrt{\frac{t_{i}}{t}\left(t-t_{i}\right)}-\Phi^{-1}\left(x\right)\;\;\; & i<j,\\
\sigma\sqrt{\left(t_{i}-t\right)}-\Phi^{-1}\left(x\right) & i\geq j.
\end{cases}
\end{equation}
where $x \in \left(0,1\right)$ solves equation \eqref{5.2.5}. The optimal upper bound in this case is then given by minimising equation \eqref{5.2.5a} over $t\in \left[0,T\right]$. As before, invoking the put-call parity of section 2, we have for the Swiss Re bond
\begin{equation}\label{5.2.7}
P \leq \left(\mbox{ub}_{t}^{1}-G\right)^{+}=:\mbox{SWUB}_{t}^{\left(BS\right)},
\end{equation}
where G is defined in \eqref{2.13}.

\subsection{Log Gamma Distribution}
The log Gamma distribution is a particular type of transformed Gamma distribution. The mortality index `$q$' is said to follow log Gamma distribution if
\begin{equation}\label{4.0.3}
\frac{\log_{e}q-\mu}{\sigma}=x\sim Gamma\left(p,a\right),
\end{equation}
where $\mu, \sigma, p$ and $a$ are parameters ($>0$) and $log$ is the natural logarithm. Useful references for reading about transformed gamma distribution are \cite{Johnson2}, \cite{Vitiello} and \cite{Cheng}.
\subsubsection{The Lower Bound $\mbox{SWLB}_{t}^{\left(LG\right)}$}
In this case the marginal cdfs $F_{Y_{i}}$ are strictly increasing. So, for the log-gamma distribution we obtain the following compact expression for ${lb}_{t}^{\left(2\right)}$ and then subtract $G$ from it to obtain $\mbox{SWLB}_{t}^{\left(LG\right)}$.
%C\left(q_{0}\left(\frac{1.3}{e^{r\left(t_{i}-t\right)}}+\left(x-\frac{1.3}{e^{r\left(t_{i}-t\right)}}\right)^{+}\right),\, t\right)
{
\allowdisplaybreaks
\begin{eqnarray}\label{4.0.10}
\mbox{lb}_{t}^{\left(2\right)}& = & 5Ce^{-rT}\Bigg(\sum_{i=1}^{j-1}q_{0}^{-t_{i}/t}\left(\frac{e^{\frac{t_{i}}{t}\mu}}{\left(\sigma^{"}\right)^{p}}\left[1-G\left(d_{2}^{'},\;p,\sigma^{"}\right)\right]-K_{1}\left[1-G\left(d_{2}^{'},\;p\right)\right]\right) \nonumber\\
& {} {} & {} \;\;\;\;\;\;\;\;+{\displaystyle \sum_{i=j}^{n}}\frac{e^{r\left(t_{i}-t\right)}}{q_{0}}\left(q_{0}e^{rt}\left[1-G\left(d_{1},\;p\right)\right]-K_{2}\left[1-G\left(d_{2},\;p\right)\right]\right)\Bigg)
\end{eqnarray}
}

where we have
\[
\sigma^{"}=1-\sigma^{'}\frac{t_{i}}{t},\; \sigma^{'}=1-\left(q_{0}e^{rt-\mu}\right)^{1/p},\]
\[d_{2}^{'}=\frac{lnd_{1}^{'}-\mu}{\sigma},\; d_{1}^{'}=q_{0}\left(1.3+\left(x^{t_{i}/t}-1.3\right)^{+}\right)^{t/t_{i}},
\]
\[K_{1}=\left(d_{1}^{'}\right)^{t_{i}/t},\;K_{2}=q_{0}\left(\frac{1.3}{e^{r\left(t_{i}-t\right)}}+\left(x-\frac{1.3}{e^{r\left(t_{i}-t\right)}}\right)^{+}\right)\],
\[d_{1}=\frac{lnK_{2}-\mu}{q_{0}e^{rt-\mu}-1},\;d_{2}=d_{1}+lnK_{2}-\mu,\]

\[
G\left(x,p\right)={\displaystyle \int_{0}^{x}\frac{1}{\Gamma\left(p\right)}x^{p-1}e^{-x}dx}
\]
and
\[G\left(x,p,\sigma^{"}\right)={\displaystyle \int_{0}^{x}\frac{\left(\sigma^{"}\right)^{p}}{\Gamma\left(p\right)}x^{p-1}e^{-\left(\sigma^{"}x\right)}dx}.
\]

\subsubsection{The Upper Bound $\mbox{SWUB}_{1}^{\left(LG\right)}$}
The first upper bound given in section 4.1 can be derived in the same manner as above exploiting that the marginal cdfs $F_{q_{i}}$ are strictly increasing. The results are given in Tables 5 and 6.

\section{Numerical Results}
The stage is now set to investigate the applications of the theory derived in the previous sections. We have successfully obtained a number of lower bounds and upper bounds for the Swiss Re bond in sections 3 and 4. In section 5 we have furnished a couple of examples. We now test these vis-a-vis the well-known Monte Carlo estimate for the Swiss Re bond. We assume that $C=1$ in all the examples. We first carry out this working under the well known \cite{Black} model in finance and then for a couple of transformed distributions. The nomenclature for the bounds has already been specified in sections 3 and 4. In all the examples, the marginal cdfs are strictly increasing.
% In addition to the computation of the lower bounds $\mbox{LB}_{t}^{\left(1\right)}$, $\mbox{LB}_{t}^{\left(2\right)}$ and $\mbox{LB}_{t}^{\left(3\right)}$, we also throw light on the optimal monitoring time $t$ for which these bounds have been attained.

In tables 1 and 2, we assume that the mortality evolution process $\left\{ q_{t}\right\} _{t\geq0}$ obeys the Black-Scholes model, specified by the following stochastic differential equation (SDE)
\[
dq_{t}=rq_{t}dt+\sigma q_{t}dW_{t}.
\]
In order to simulate a path, we will consider the value of the mortality index in the three years that form the term of the bond, i.e., $n=3$. In fact we consider the time points as $t_{1}=1,...,t_{n}=T=3$. We invoke the following equation to generate the mortality evolution:
\begin{equation}\label{4.0.1}
q_{t_{j}}=q_{t_{j-1}}\exp\left[\left(r-\frac{1}{2}\sigma^{2}\right)\delta t+\sigma\sqrt{\delta t}Z_{j}\right]\;\;\; Z_{j}\sim N\left(0,1\right),\;\;\; j=1,2,\ldots,n.
\end{equation}
We highlight below the parameter choices in accordance with \cite{Lin}. The value of the interest rate is varied in table 1 while table 2 experiments with the variation in the base value of the mortality index while assuming a zero interest rate.
Parameter choices for tables 1 and 2 with $t$ specified in terms of years are:
\[
q_{0}=0.008453,\; T=3,\; t_{0}=0,\; n=3, \; \sigma=0.0388.
\]

Table 2 is followed by figures 1-3. While figures 1 and 2 depict comparisons between the bounds, figure 3 portrays the price bounds for the Swiss Re bond generated by the Black-Scholes model. We will let MC denote the Monte Carlo estimate for the Swiss Re bond.

Table 1 reflects that the relative difference ($=\frac{|bound-MC|}{MC}$) between any bound and the benchmark Monte Carlo estimate increases with an increase in the interest rate for a fixed value of the base mortality index $q_{0}$. This observation is echoed by figure 1. On the other hand, figure 2 depicts the difference between the Monte Carlo estimate of the Swiss Re bond and the derived bounds. The bound $\mbox{SWLB}_{t}^{\left(BS\right)}$ fares much better than $\mbox{SWLB}_{1}$. The absolute difference between the estimated price and the bounds increase as the value of the base mortality index is increased and then there is a switch and this gap begins to diminish. This observation is supported by the fact that an increase in the starting value of mortality increases the possibility of a catastrophe which leads to the washing out of the principal or in other words the option goes out of money.

We now consider an additional example. Assume that the mortality rate `$q$' obeys the four-parameter transformed Normal ($S_{u}$) Distribution (for details see \cite{Johnson} and \cite{Johnson2}) which is defined as follows

\begin{equation}\label{4.0.2}
sinh^{-1}\left(\frac{q-\alpha}{\beta}\right)=x\sim N\left(\mu,\sigma^{2}\right),
\end{equation}
where $\alpha, \beta, \mu$ and $\sigma$ are parameters ($\beta, \sigma > 0$) and $sinh^{-1}$ is the inverse hyperbolic sine function.

\begin{table} [h!]
\scriptsize
\begin{center}
\begin{threeparttable}
%\caption{surface intensity measurements}
\centering
\begin{tabular}{|r||r|r|r|r|r|r|r|r|r|}
\hline
%\multicolumn{4}{|c|}{Team sheet} \\
%\hline\hline
$\;$r$\;$ & $\mbox{SWLB}_{0}\;\;\;\;$ & $\mbox{SWLB}_{1}\;\;\;\;$ & $\;\;\mbox{SWLB}_{t}^{\left(BS\right)}\;\;$ &\; $MC\; \mbox{with S.E.}\;\;\;\;$ & $\;\;\mbox{SWUB}_{t}^{\left(BS\right)}\;\;$ & $\mbox{SWUB}_{1}\;\;\;$ \\ \hline\hline
%\multirow{6}{*}{0.10} & 70 & 32.022356 & 32.022356 & 32.022356 $\;\;\;$ 1 & 32.022356 $\;\;\;$ 1 & %32.022356  $\;\;\;$ 1 & 32.022356 & 32.022356\\
0.035 & 0.899130889131 & 0.899130889153 & 0.899131577419 & 0.899131338643 & 0.899131588500 & 0.899131637780 \\
      &                &                &                & (0.000007814868) &                &                \\
0.030 & 0.913324024542 & 0.913324024546 & 0.913324256506 & 0.913324365180 & 0.913324317265 & 0.913324320930 \\
      &                &                &                & (0.000005483857) &                &                \\
0.025 & 0.927447505802 & 0.927447505803 & 0.927447580428 & 0.927447582074 & 0.927447605312 & 0.927447619324 \\
      &                &                &                & (0.000003766095) &                &                \\
0.020 & 0.941626342686 & 0.941626342687 & 0.941626365600 & 0.941626356704 & 0.941626369727 & 0.941626384749 \\
      &                &                &                & (0.000002549695) &                &                \\
0.015 & 0.955935721003 & 0.955935721003 & 0.955935727716 & 0.955935715489 & 0.955935732230 & 0.955935736078 \\
      &                &                &                & (0.000001673442) &                &                \\
0.010 & 0.970419124546 & 0.970419124546 & 0.970419126422 & 0.970419112046 & 0.970419126802 & 0.970419129772 \\
      &                &                &                & (0.000001032941) &                &                \\
0.005 & 0.985101139986 & 0.985101139986 & 0.985101140486 & 0.985101142704 & 0.985101140840 & 0.985101141738 \\
      &                &                &                & (0.000000646744) &                &                \\
0.000 & 0.999995778016 & 0.999995778016 & 0.999995778143 & 0.999995770298   & 0.999995778175 & 0.999995778584 \\
      &                &                &                & (0.000000405336) &                &                \\
 \hline
\end{tabular}
\caption{Lower Bounds and Upper Bound $\mbox{SWUB}_{1}$ for the Swiss Re Mortality Bond under the Black-Scholes Model with $q_{0}=0.008453$ and $\sigma=0.0388$ in accordance with \cite{Lin}. MC Simulations:5000000 iterations (Antithetic Method)}
\end{threeparttable}
\end{center}
\end{table}

\begin{table} [h!]
\scriptsize
\begin{center}
\begin{threeparttable}
%\caption{surface intensity measurements}
\centering
\begin{tabular}{|r||r|r|r|r|r|r|r|r|}
\hline
%\multicolumn{4}{|c|}{Team sheet} \\
%\hline\hline
$\mbox{q}_{0}\;$ & $\mbox{SWLB}_{0}\;\;\;\;$ & $\mbox{SWLB}_{1}\;\;\;\;$ & $\;\;\mbox{SWLB}_{t}^{\left(BS\right)}\;\;$ & $\;MC\; \mbox{with S.E.}\;\;\;\;\;\;$ & $\;\;\mbox{SWUB}_{t}^{\left(BS\right)}\;\;$ & $\mbox{SWUB}_{1}\;\;\;$ \\ \hline\hline
%\multirow{6}{*}{0.10} & 70 & 32.022356 & 32.022356 & 32.022356 $\;\;\;$ 1 & 32.022356 $\;\;\;$ 1 & %32.022356  $\;\;\;$ 1 & 32.022356 & 32.022356\\
0.007 & 1.000000000000 & 1.000000000000 & 1.000000000000 & 1.000000000000 & 1.000000000000 & 1.000000000000 \\
      &                &                &                & (0.000000000000) &              &                \\
0.008 & 0.999999915252 & 0.999999915252 & 0.999999915252 & 0.999999915033 & 0.999999915253 & 0.999999915253 \\
      &                &                &                & (0.000000052478) &                &                \\
0.008453 & 0.999995778016 & 0.999995778016 & 0.999995778143 & 0.999995770298 & 0.999995778175 & 0.999995778584 \\
      &                &                &                & (0.000000405336) &                &                \\
0.009 & 0.999821987943 & 0.999821987950 & 0.999822025863 & 0.999822630214 & 0.999822374801 & 0.999822875816 \\
      &                &                &                & (0.000003051524) &                &                \\
0.010 & 0.978292691035 & 0.978310383929 & 0.978503560221 & 0.978782997810 &	0.978292691184 & 0.986262918347 \\
      &                &                &                & (0.000042738093) &                &                \\
0.011 & 0.572750782004 & 0.610962124258	& 0.610962123857 & 0.652245039892 & 0.572755594265 & 0.877336305502 \\
      &                &                &                & (0.000090193709) &                &                \\
0.012 & 0.000000000000 & 0.040209774144 & 0.040209770810 & 0.094677358603 &	0.000000000000 & 0.395672911251 \\
      &                &                &                & (0.000089559585) &                &                \\
0.013 & 0.000000000000 & 0.000000000000 & 0.000000000000 & 0.001665407936 & 0.000000000000 & 0.083466184427 \\
      &                &                &                & (0.000011391823) &                &                \\
0.014 & 0.000000000000 & 0.000000000000 & 0.000000000000 & 0.000002890238 &	0.000000000000 & 0.008942985848 \\
      &                &                &                & (0.000000379522) &                &                \\
\hline
\end{tabular}
\caption{Lower Bounds and Upper Bound $\mbox{SWUB}_{1}$ for the Swiss Re Mortality Bond under the Black-Scholes Model with $r=0.0$ and $\sigma=0.0388$ in accordance with \cite{Lin}. MC Simulations:5000000 iterations (Antithetic Method)}
\end{threeparttable}
\end{center}
\end{table}

\newpage

\begin{figure}[H]
    \centering
     \includegraphics[clip, trim=2cm 18.5cm 0.9cm 2cm, width=1.00\textwidth]{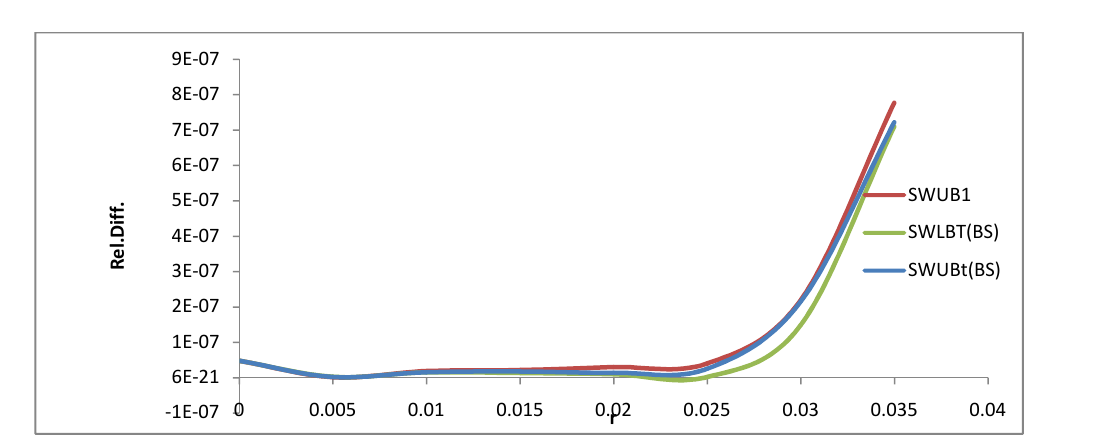}
      \caption{Relative Difference of $\mbox{SWLB}_{t}^{\left(BS\right)}$, $\mbox{SWUB}_{t}^{\left(BS\right)}$ and $\mbox{SWUB}_{1}$ w.r.t. MC estimate under Black-Scholes model}
      \label{fig40}
\end{figure}

\begin{figure}[H]
    \centering
     \includegraphics[clip, trim=2cm 18.3cm 0.9cm 2.1cm, width=1.00\textwidth]{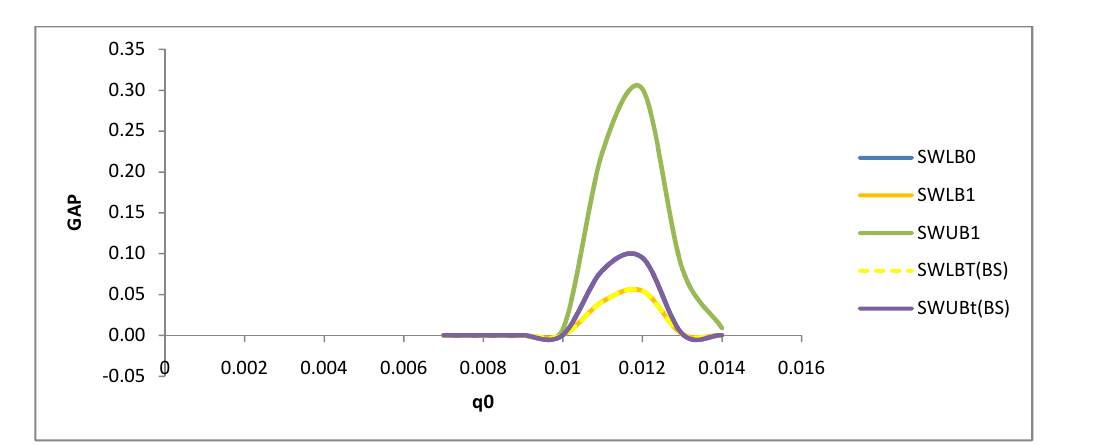}
      \caption{Comparison of different bounds under B-S model in terms of difference from MC estimate for r=0}
      \label{fig41}
\end{figure}

\begin{figure}[H]
    \centering
     \includegraphics[clip, trim=2cm 16.7cm 0.9cm 2cm, width=1.00\textwidth]{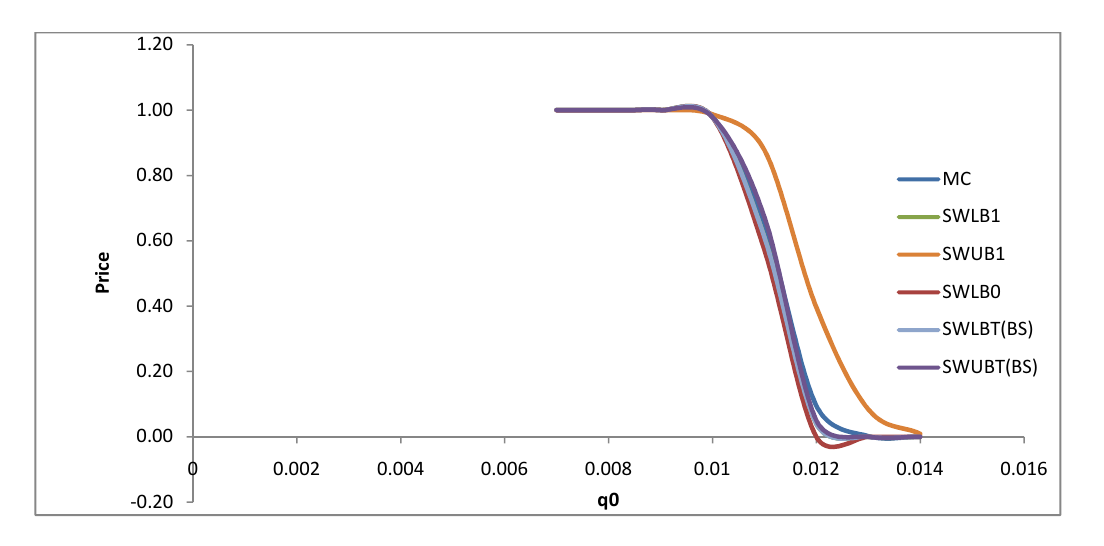}
      \caption{Price Bounds under Black-Scholes model for the parameter choice of Lin and Cox(2008) Model}
      \label{fig42}
\end{figure}

For table 3, we vary the interest rate as in table 1 and use the parameter set employed by \cite{Tsai}. The aforesaid authors use the mortality catastrophe model of \cite{Lin} to generate the data and then utilize the quantile-based estimation of \cite{Slifker} to estimate the parameters of the $S_{u}$-fit. The initial mortality rate and time points are same as for tables 1 and 2. The following arrays present the values of the parameters for the three years 2004, 2005 and 2006 that were covered by the Swiss Re bond.
\[
\alpha=[0.008399, 0.008169, 0.007905],\;\beta=[0.000298, 0.000613, 0.000904],
\]
\[
\mu=[0.70780, 0.58728, 0.58743]\;\text{and}\;\sigma=[0.67281, 0.50654, 0.42218].
\]

The value of $\mbox{SWLB}_{t}^{\left(2\right)}$ in table 3 has been calculated by using `Numerical Integration' in MATLAB since the first term in \eqref{4.5.10} can not be calculated mathematically. Table 3 adds weight to the claim that the bounds are extremely tight for a large class of models assuming a variety of distributions for the mortality index. Finally in tables 4 and 5, we experiment with log gamma distribution by varying the interest rate in table 4 and the base mortality rate in the the latter. The parameters are chosen as in \cite{Cheng} who employ an approach similar to \cite{Tsai} outlined above with $q_{0}=.0088$ but use maximum likelihood estimation to obtain the parameters of the fitted log gamma distribution. As before, the following arrays present the year wise parameters
\[
p=[61.6326, 64.2902, 71.8574],\;a=[0.0103, 0.0098, 0.0080],
\]
\[
\mu=[-5.2452, -5.4600, -5.7238]\;\text{and}\;\sigma=[7.4\times10^{-5}, 9.5\times10^{-5}, 9.4\times10^{-5}].
\]
Tables 4 and 5 clearly shows that even for non-normal universe, the bounds are extremely precise. Figures 4-6 are drawn on the lines of figures 1-3 and strongly support our observation.

\begin{table} [h!]
\footnotesize
\begin{center}
\begin{threeparttable}
%\caption{surface intensity measurements}
\centering
\begin{tabular}{|r||r|r|r|r|r|r|r|r|r|r|}
\hline
%\multicolumn{4}{|c|}{Team sheet} \\
%\hline\hline
$\;$r$\;$ & $\mbox{SWLB}_{0}\;\;\;\;$ & $\mbox{SWLB}_{1}\;\;\;\;$ & $\;\mbox{SWLB}_{t}^{\left(2\right)}\;\;\;\;\;$ & $\;MC\;\;\;\;\;$ & $\;\;\mbox{S.E.(MC)}\;\;\;\;$ & $\mbox{SWUB}_{1}\;\;\;$ \\ \hline\hline
%\multirow{6}{*}{0.10} & 70 & 32.022356 & 32.022356 & 32.022356 $\;\;\;$ 1 & 32.022356 $\;\;\;$ 1 & %32.022356  $\;\;\;$ 1 & 32.022356 & 32.022356\\
0.035 & 0.88325546 & 0.88432143 & 0.88554815 & 0.88468962 &	0.00006349 & 0.88680657 \\
0.030 & 0.90340398 & 0.90401002 & 0.90469396 & 0.90422765 & 0.00004987 & 0.90548179 \\
0.025 & 0.92160707 & 0.92193552 & 0.92229117 & 0.92201394 &	0.00003804 & 0.92275950 \\
0.020 & 0.93840783 & 0.93857698 & 0.93874756 & 0.93863396 &	0.00002794 & 0.93901043 \\
0.015 & 0.95428713 & 0.95436972 & 0.95444409 & 0.95441569 &	0.00001956 & 0.95458265 \\
0.010 & 0.96963954 & 0.96967776 & 0.96970660 & 0.96968765 &	0.00001352 & 0.96977488 \\
0.005 & 0.98476274 & 0.98477952 & 0.98478912 & 0.98478917 &	0.00000859 & 0.98482046 \\
0.000 & 0.99986135 & 0.99986838 & 0.99987088 & 0.99987622 &	0.00000513 & 0.99988427 \\
\hline
\end{tabular}
\caption{Lower Bounds and Upper Bound $\mbox{SWUB}_{1}$ for the Swiss Re Mortality Bond under the $S_{u}$ distribution with $q_{0}=0.008453$ and parameter choice in accordance with \cite{Tsai}. MC Simulations:2000000 iterations (Antithetic Method)}
\end{threeparttable}
\end{center}
\end{table}

\begin{table} [h!]
\footnotesize
\begin{center}
\begin{threeparttable}
%\caption{surface intensity measurements}
\centering
\begin{tabular}{|r||r|r|r|r|r|r|r|r|r|}
\hline
%\multicolumn{4}{|c|}{Team sheet} \\
%\hline\hline
$\;$r$\;$ & $\mbox{SWLB}_{0}\;\;\;\;$ & $\mbox{SWLB}_{1}\;\;\;\;$ & $\;\;\mbox{SWLB}_{t}^{\left(LG\right)}\;\;$ & $\;MC\;\;\;\;\;\;\;$ & $\;\;\mbox{S.E.(MC)}\;\;\;$ & $\mbox{SWUB}_{1}\;\;\;\;$ \\ \hline\hline
%\multirow{6}{*}{0.10} & 70 & 32.022356 & 32.022356 & 32.022356 $\;\;\;$ 1 & 32.022356 $\;\;\;$ 1 & %32.022356  $\;\;\;$ 1 & 32.022356 & 32.022356\\
0.035 & 0.84803277 & 0.84842404 & 0.85596973 & 0.85408651 &	0.00049859 & 0.86610436 \\
0.030 & 0.87357702 & 0.87381345 & 0.87911092 & 0.87815608 & 0.00044050 & 0.88724013 \\
0.025 & 0.89710281 & 0.89724267	& 0.90088166 & 0.90050920 &	0.00038741 & 0.90728309 \\
0.020 & 0.91889696 & 0.91897792	& 0.92142119 & 0.92103020 & 0.00034012 & 0.92636640 \\
0.015 & 0.93924097 & 0.93928679	& 0.94088833 & 0.94092949 &	0.00028650 & 0.94463331 \\
0.010 & 0.95840372 & 0.95842907	& 0.95945270 & 0.95947457 &	0.00024259 & 0.96223065 \\
0.005 & 0.97663543 & 0.97664912 & 0.97728623 & 0.97748291 &	0.00020357 & 0.97930297 \\
0.000 & 0.99416285 & 0.99417007	& 0.99455565 & 0.99466024 &	0.00016677 & 0.99598733 \\
\hline
\end{tabular}
\caption{Lower Bounds and Upper Bound $\mbox{SWUB}_{1}$ for the Swiss Re Mortality Bond under the transformed gamma distribution with $q_{0}=0.0088$ and parameter choice in accordance with \cite{Cheng}. MC Simulations:100000 iterations}
\end{threeparttable}
\end{center}
\end{table}

\begin{table} [h!]
\footnotesize
\begin{center}
\begin{threeparttable}
%\caption{surface intensity measurements}
\centering
\begin{tabular}{|r||r|r|r|r|r|r|r|r|}
\hline
%\multicolumn{4}{|c|}{Team sheet} \\
%\hline\hline
$\mbox{q}_{0}\;$ & $\mbox{SWLB}_{0}\;\;\;\;$ & $\mbox{SWLB}_{1}\;\;\;\;$ & $\mbox{SWLB}_{t}^{\left(LG\right)}\;\;\;$ & $\;MC\;\;\;\;\;$ &  $\;\;\mbox{S.E.(MC)}\;\;\;$ & $\mbox{SWUB}_{1}\;\;\;\;$ \\ \hline\hline
%\multirow{6}{*}{0.10} & 70 & 32.022356 & 32.022356 & 32.022356 $\;\;\;$ 1 & 32.022356 $\;\;\;$ 1 & %32.022356  $\;\;\;$ 1 & 32.022356 & 32.022356\\
0.008 & 0.99976607 & 0.99976607	& 0.99977284 & 0.99978465 & 0.00003227 & 0.99977956 \\
0.0088 & 0.99416285 & 0.99417007 &	0.99455565	& 0.99466024 & 0.00016677 & 0.99598733 \\
0.009 &	0.98910499 & 0.98914615	& 0.98995211 & 0.99003596 &	0.00023335 & 0.99338335 \\
0.010 &	0.87669254 & 0.88804918	& 0.89637631 & 0.89137680 &	0.00077924 & 0.95818959 \\
0.011 &	0.41097106 & 0.59608967	& 0.59608967 & 0.56844674 &	0.00128761 & 0.83720797 \\
0.012 &	0.00000000 & 0.27104597	& 0.27104597 & 0.20822580 &	0.00105003 & 0.61383872 \\
0.013 &	0.00000000 & 0.08274071	& 0.08274071 & 0.04612178 &	0.00052388 & 0.38182244 \\
0.014 &	0.00000000 & 0.01270202	& 0.01270202 & 0.00673234 &	0.00019165 & 0.21222938 \\
0.015 &	0.00000000 & 0.00000000	& 0.00000000 & 0.00084831 &	0.00006528 & 0.11042035 \\
0.016 &	0.00000000 & 0.00000000	& 0.00000000 & 0.00009165 &	0.00002235 & 0.05553927 \\
0.017 &	0.00000000 & 0.00000000	& 0.00000000 & 0.00000621 &	0.00000447 & 0.02757685 \\
0.018 &	0.00000000 & 0.00000000	& 0.00000000 & 0.00000205 &	0.00000145 & 0.01369796 \\
\hline
\end{tabular}
\caption{Lower Bounds and Upper Bound $\mbox{SWUB}_{1}$ for the Swiss Re Mortality Bond under the transformed gamma distribution with $r=0.0$ and parameter choice in accordance with \cite{Cheng}. MC Simulations:100000 iterations}
\end{threeparttable}
\end{center}
\end{table}

\newpage

\begin{figure}[H]
    \centering
     \includegraphics[clip, trim=2cm 18.3cm 0.9cm 2.2cm, width=1.00\textwidth]{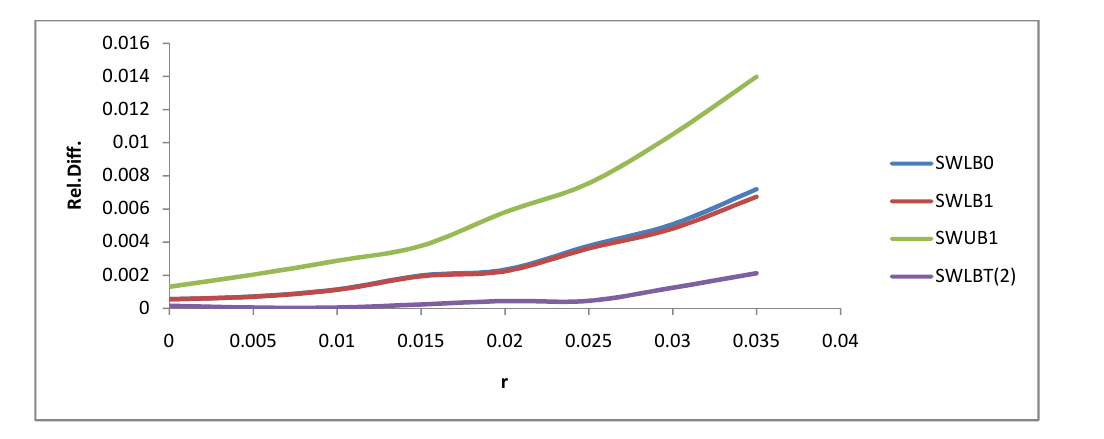}
      \caption{Relative Difference of Lower Bounds and SWUB1 w.r.t. MC estimate under Transformed Gamma Distribution}
      \label{fig43}
\end{figure}

\begin{figure}[H]
    \centering
     \includegraphics[clip, trim=2cm 18.3cm 0.9cm 2.2cm, width=1.00\textwidth]{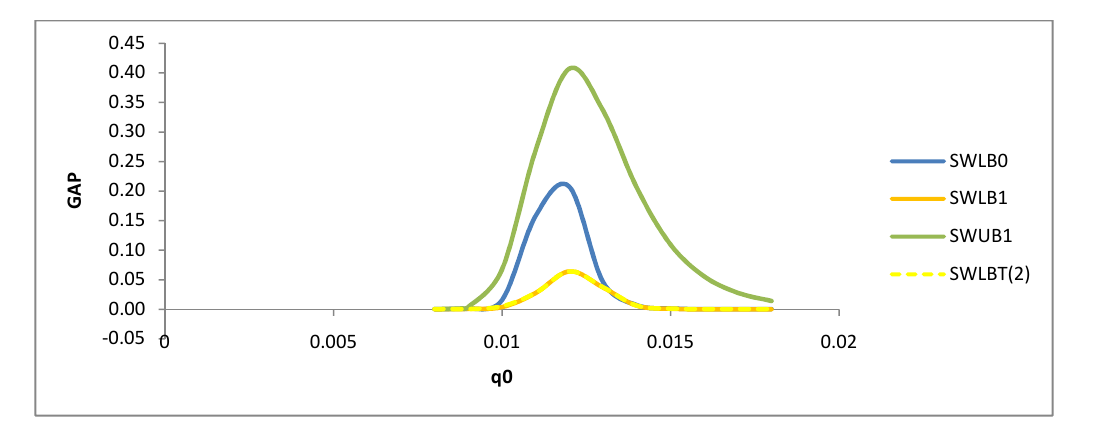}
      \caption{Comparison of different bounds under Transformed Gamma Distribution in terms of difference from MC estimate for r=0}
      \label{fig44}
\end{figure}

\begin{figure}[H]
    \centering
     \includegraphics[clip, trim=2cm 16.7cm 0.9cm 2.2cm, width=1.00\textwidth]{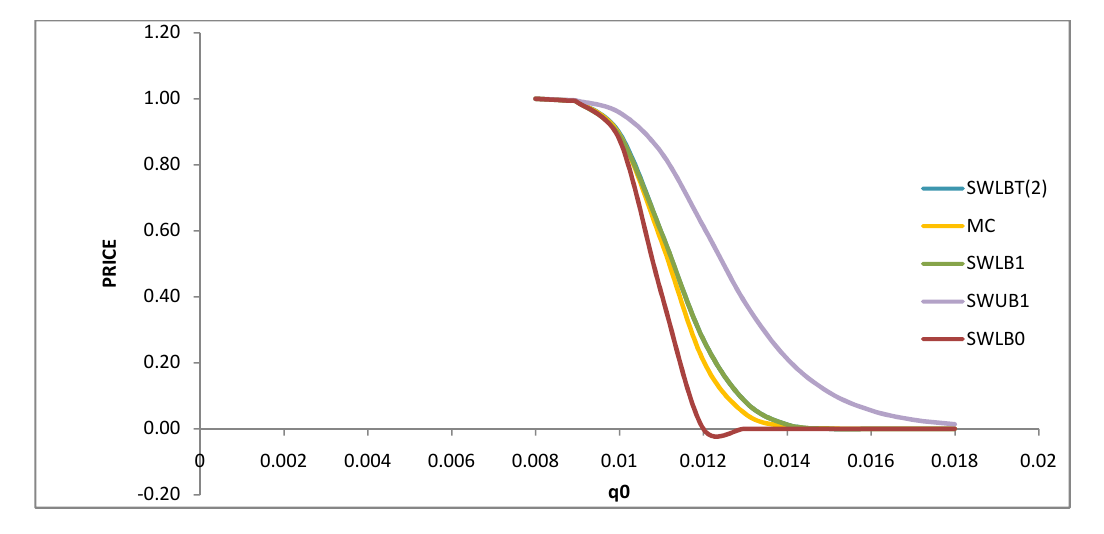}
      \caption{Price Bounds under Transformed Gamma Distribution for the parameter choice of Lin and Cox(2008) Model}
      \label{fig45}
\end{figure}

\section{Conclusions}
Mortality forecasts are extremely significant in the management of life insurers and private pension plans. Securitization and construction of mortality bonds has become an important part of capital market solutions. Prior to the launch of the Swiss Re bond in 2003, life insurance securitization was not designed to handle mortality risk.

This article investigates the designing of price bounds for the Swiss Re mortality bond 2003. As stated in \cite{Deng}, an incomplete mortality market that has no arbitrage opportunities guarantees the existence of at least one risk-neutral measure termed the equivalent martingale measure $Q$ that can be used for calculating the fair prices of mortality securities. We rely on this fact and devise bounds for the mortality security in question without assuming any particular model. Model-specific bounds can then be achieved by plugging in the requisite models into the general bounds. 

To the best of our knowledge, there is only one earlier publication by \cite{Huang} in direction of price bounds for the Swiss Re bond. However, these authors propose gain-loss bounds that suffer from model risk. Our results assume the trading of vanilla options written on the mortality index, as in that case one can use the market price of these options to create bounds which are truly model independent. A worthy observation is that the stimulant for the present work is the theory of comonotonicity. One can therefore easily extend this approach for computing tight bounds for other mortality and longevity linked securities.

\medskip

\noindent\textbf{Acknowledgments.} R.B. gratefully acknowledges the financial support of the Institute and Faculty of Actuaries..


\begin{thebibliography}{9}

\bibitem{prime1}
Albrecher, H., Mayer, P. A. and Schoutens, W.
\textit{General {L}ower {B}ounds for {A}rithmetic {A}sian {O}ption {P}rices}.
Applied Mathematical Finance, 15(2):123-149, 2008.

\bibitem{Bauer}
Bauer, D.
\textit{Stochastic {M}ortality {M}odeling and {S}ecuritizaton of {M}ortality {R}isk}.
ifa-Verlag, Ulm, Germany, 2008.

\bibitem{Milevsky2}
Bayraktar, E., Milevsky, M.A., Promislow, S.D. and Young, V.R.
\textit{Valuation of {M}ortality {R}isk via the {I}nstantaneous {S}harpe {R}atio: {A}pplications to {L}ife {A}nnuities}.
Journal of Economic Dynamics and Control, 33:676-691, 2009.

\bibitem{Milevsky3}
Bayraktar, E. and Young, V.R.
\textit{Pricing {O}ptions in {I}ncomplete {E}quity {M}arkets via the {I}nstantaneous {S}harpe {R}atio}.
Annals of Finance, 4(4):399-429, 2008.

\bibitem{Beelders}
Beelders, O. and Colorassi, D.
\textit{Modelling {M}ortality {R}isk with {E}xtreme {V}alue {T}heory: {T}he {C}ase of {S}wiss {R}e's {M}ortality-{I}ndexed {B}ond}.
Global Association of Risk Professionals, 19 (July/ August):26-30, 2004.

\bibitem{Black}
Black, F. and Scholes, M.
\textit{The {P}ricing of {O}ptions and {C}orporate {L}iabilities}.
Journal of Political Economy, 81(3):637-654, 1973.

\bibitem{Blake}
Blake, D., Cairns, A.J.G. and Dowd, K.
\textit{Living with {M}ortality: {L}ongevity {B}onds and {O}ther {M}ortality-linked {S}ecurities}.
British Actuarial Journal, 12:153-228, 2006.

\bibitem{Blake2008}
Blake, D., Cairns, A. and Dowd, K.
\textit{The {B}irth of the {L}ife {M}arket}.
Asia-Pacific Journal of Risk and Insurance, 3:6-36, 2008.

\bibitem{Blake3}
Blake, D., Cairns, A., Coughlan, G., Dowd, K. and MacMinn, R.
\textit{The {N}ew {L}ife {M}arket}.
The Journal of Risk and Insurance, 80(3):501-558, 2013.

\bibitem{Cairns2006}
Cairns, A.J.G., Blake, D. and Dowd K.
\textit{A {T}wo-factor {M}odel for {S}tochastic {M}ortality with {P}arameter {U}ncertainty: {T}heory and {C}alibration}.
Journal of Risk and Insurance, 73(4):687-718, 2006.

\bibitem{Chen2009}
Chen, H. and Cox, S.H.
\textit{Modeling {M}ortality with {J}umps: {A}pplication to {M}ortality {S}ecuritization}.
Journal of Risk and Insurance, 76(3):727-751, 2009.

\bibitem{Chen10}
Chen, H. and Cummins, J.D.
\textit{Longevity {B}ond {P}remiums: {T}he {E}xtreme {V}alue {A}pproach and {R}isk {C}ubic {P}ricing}.
Insurance: Mathematics and Economics, 46(1):150-161, 2010.

\bibitem{Chen3}
Chen, H. and MacMinn, R.D. and Sun, T.
\textit{Mortality {D}ependence and {L}ongevity {B}ond {P}ricing: {A} {D}ynamic {F}actor {C}opula {M}ortality {M}odel with the {GAS} {S}tructure}.
Journal of Risk and Insurance, 84(S1):393-415, 2017.

\bibitem{Cheng}
Cheng, H. W., Tzeng, C.-F., Hsieh, M.-H. and Tsai, J. T.
\textit{{P}ricing of {M}ortality-{L}inked {S}ecurities with {T}ransformed {G}amma {D}istribution}.
Academia Economic Papers, 42(2): 271-303, 2014.

\bibitem{Coughlan}
Coughlan, G.
\,Longevity {R}isk {T}ransfer: {I}ndices and {C}apital {M}arket {S}olutions in \textit{The {H}andbook of {I}nsurance-linked {S}ecurities \emph{{E}ds.} \emph{{P}auline} \emph{{B}arrieu} \emph{and} \emph{{L}uca} \emph{{A}lbertini}}.
West Sussex: John Wiley and Sons, 2009.

\bibitem{Cox}
Cox, S.H. and Lin, Y.
\textit{Natural {H}edging of {L}ife and {A}nnuity {M}ortalty {R}isks}.
North American Actuarial Journal, 11(3):1-15, 2007.

\bibitem{Cox10}
Cox, S.H. and Lin, Y. and Pedersen, H.
\textit{Mortality {R}isk {M}odelling: {A}pplications to {I}nsurance {S}ecuritization}.
Insurance: Mathematics and Economics, 46(1):242-253, 2010.

\bibitem{Cox3}
Cox, S. H., Y. Lin and S. S. Wang.
\textit{Multivariate {E}xponential {T}ilting and {P}ricing {I}mplications for {M}ortality {S}ecuritization}.
The Journal of Risk and Insurance, 73(4):719-736, 2006.

\bibitem{Dahl}
Dahl, M. and Moller, T.
\textit{Valuation and {H}edging of {L}ife {I}nsurance {L}iablities with {S}ystematic {M}ortality {R}isk}.
Insurance: Mathematics and Economics, 39(2):193-217, 2006.

\bibitem{Deng}
Deng, Y., Brockett, P.L. and MacMinn, R.D.
\textit{Longevity/{M}ortality {R}isk {M}odeling and {S}ecurities {P}ricing}.
The Journal of Risk and Insurance, 79(3):697-721, 2012.

\bibitem{Denuit2007}
Denuit, M., Devolder, P. and Goderniaux, A.-C.
\textit{Securitization of {L}ongevity {R}isk: {P}ricing {S}urvivor {B}onds with {W}ang {T}ransform in the {L}ee-{C}arter {F}ramework}.
The Journal of Risk and Insurance, 74(1):87-113, 2007.

\bibitem{1}
Dhaene, J., Denuit, M., Goovaerts, M., Kaas, R. and Vyncke, D.
\textit{The {C}oncept of {C}omonotonicity in {A}ctuarial {S}cience and {F}inance: {T}heory}.
Insurance: Mathematics and Economics, 31(1):3-33, 2002.

\bibitem{2}
Dhaene, J., Denuit, M., Goovaerts, M.J., Kaas, R. and Vyncke, D.
\textit{The {C}oncept of {C}omonotonicity in {A}ctuarial {S}cience and {F}inance: {A}pplications}.
Insurance: Mathematics and Economics, 31(2):133-161, 2002.

\bibitem{Dhaene}
Dhaene, J., Kukush, A., Luciano, E., Schoutens, W. and Stassen, B.
\textit{On the ({I}n-) dependence between {F}inancial and  {A}ctuarial {R}isks}.
Insurance: Mathematics and Economics, 52(3):522-531, 2013.

\bibitem{Dhaene2}
Dhaene, J., Wang, S., Young, V.R. and Goovaerts, M.J.
\textit{Comonotonicity and {M}aximum {S}top {L}oss {P}remiums}.
Bulletin of the Swiss Association of Actuaries, 2:99-113, 2000.

\bibitem{Dowd}
Dowd, K., Blake, D., Cairns, A.J.G. and Dawson, P.
\textit{Survivor {S}waps}.
Journal of Risk and Insurance, 73(1):1-17, 2006.


\bibitem{Goovaerts21}
Goovaerts, M., Kaas, R. and Laeven, R.
\textit{Worst {C}ase {R}isk {M}easurement: {B}ack to the {F}uture}
Insurance: Mathematics and Economics, 49(3):380-392, 2011.

\bibitem{Goovaerts1}
Goovaerts, M.J. and Laeven, R.
\textit{Actuarial {R}isk {M}easures for {F}inancal {D}ervative {P}ricing}.
Insurance: Mathematics and Economics, 42(2):540-547, 2008.

\bibitem{Hainaut}
Hainaut, D. and Devolder, P.
\textit{Mortality {M}odelling with {L}\`{e}vy {P}rocesses}.
Insurance: Mathematics and Economics, 42(1):409-418, 2008.


\bibitem{Huang}
Huang, Y.L., Tsai, J.T., Yang, S.S. and Cheng, H.W.
\textit{Price {B}ounds of {M}ortality-linked {S}ecurity in {I}ncomplete {I}nsurance {M}arket}.
Insurance: Mathematics and Economics, 55:30-39, 2014.


\bibitem{Hunt}
Hunt, A. and Blake, D.
\textit{Modelling {L}ongevity {B}onds: {A}nalysing the {S}wiss {R}e {K}ortis {B}ond}.
Insurance: Mathematics and Economics, 63:12-29, 2015.

\bibitem{Jennings}
Jennings, R.M. and Trout, A.P.
\textit{{T}ontine: {F}rom the {R}eign of {L}ouis {X}IV to the {F}rench {R}evolutionary {E}ra},
Monograph 12, S. S. Huebner Foundation, The Wharton School, University of Pennsylvania, Philadelphia, 1982.


\bibitem{Johnson}
Johnson, N.L.
\textit{Systems of {F}requency {C}urves {G}enerated by {M}ethods of {T}ranslation}.
Biometrika, 36:149-176, 1949.

\bibitem{Johnson2}
Johnson, N.L., Kotz, S. and Balakrishnan, N.
\textit{Continuous Univariate Distributions, Vol. 1, 2nd Edition}.
Wiley, New York, 1994.

\bibitem{Upper}
Kaas, R., Dhaene, J. and Goovaerts, M.
\textit{Upper and {L}ower {B}ounds for {S}ums of {R}andom {V}ariables}.
Insurance: Mathematics and Economics, 27(2):151-168, 2000.

\bibitem{Laeven}
Laeven, R. 
\textit{Worst {V}aR Scenarios: {A} {R}emark}.
Insurance: Mathematics and Economics, 44(2):159-163, 2009.

\bibitem{Lane}
Lane, M.
\textit{Longevity {R}isk from the {P}erspective of the {I}LS {M}arkets}.
Geneva Papers on Risk and Insurance. Issues and Practice, 36(4):501-515, 2011.

\bibitem{Lauschagne}
Labuschagne, C.C.A. and Offwood, T.M.
\textit{A {N}ote on the {C}onnection {B}etween the {E}sscher–{G}irsanov {T}ransform and the {W}ang {T}ransform}.
Insurance: Mathematics and Economics, 47(3):385:390, 2010.

\bibitem{Lin2}
Lin, Y. and Cox, S.H.
\textit{Securitization of {M}ortality {R}isks in {L}ife {A}nnuities}.
Journal of Risk and Insurance, 72(2):227-252, 2005.

\bibitem{Lin}
Lin, Y. and Cox, S.H.
\textit{Securitization of {C}atastrophe {M}ortality {R}isks}.
Insurance: Mathematics and Economics, 42(2):628-637, 2008.

\bibitem{Liu}
Liu,Y. and Li,J.S.-H.
\textit{The {A}ge {P}attern of {T}ransitory {M}ortality {J}umps and its {I}mpact on the {P}ricing of {C}atastrophic {M}ortality {B}onds}.
Insurance: Mathematics and Economics, 64:135-150, 2015.

\bibitem{Luis}
Luis, L.-G.
\textit{Insuring {L}ife: {V}alue, {S}ecurity and {R}isk}.
Routledge, Oxon, 2017.


\bibitem{Melnick}
Melnick, E. L. and Everitt, B. S.
\textit{Encyclopedia of {Q}uantitative {R}isk {A}nalysis and {A}ssessment}.
Wiley, Vol. 1-4, 2008.

\bibitem{Menioux}
M\`{e}nioux, J.
\textit{Securitization of {L}ife {r}isks: The {C}edant's {P}oint of {V}iew}
In proceedings of the Workshop on Mortality and Longevity Risks, 2008.
\\\texttt{www.axa.com/lib/axa/uploads/GRM/PS2.Menioux Slides.pdf,1 February 2008}.

\bibitem{Milevsky}
Milevsky, M.A., Promislow, S.D. and Young, V.R.
\textit{Financial {V}aluation of {M}ortality {R}isk via the {I}nstantaneous {S}harpe {R}atio: {A}pplications to {P}ricing {P}ure {E}ndowments}.
Working Paper, Department of Mathematics, \emph{Unversity of Michigan}, 2005.
\\\texttt{http:// \url{arxiv.org/abs/0705.1302}, [14/07/2014]}.


\bibitem{Pelsser}
Pelsser, A.
\textit{On the {A}pplicability of the {W}ang {T}ransform for {P}ricing {F}inancial {R}isks}.
ASTIN Bulletin, 38(1):171-181, 2008.

\bibitem{Shang09}
Shang, Q., Qin, X. and Wang, Y.
\textit{Design of {C}atastrophe {M}ortality {B}onds based on the {C}omonotonicity
{T}heory and {J}ump-{D}iffusion {P}rocess},
International Journal of Innovative Computing. Information and Control, 5(4):991-1000, 2009.


\bibitem{Shang}
Shang, Z., Goovaerts, M.J. and Dhaene, J.
\textit{A {R}ecursive {A}pproach to {M}ortality-linked {D}erivative {P}ricing}.
Insurance: Mathematics and Economics, 49(2):240-248, 2011.

\bibitem{Simon}
Simon, S., Goovaerts, M.J. and Dhaene, J.
\textit{An {E}asy {C}omputable {U}pper {B}ound for the {P}rice of an {A}rithmetic {A}sian {O}ption}.
Insurance: Mathematics and Economics, 26(2-3):175-184, 2000.

\bibitem{Slifker}
Slifker, J.F. and Shapiro, S.S.
\textit{The {J}ohnson {S}ystem: {S}election and {P}arameter {E}stimation}.
Technometrics, 22(2): 239-246, 1980.

\bibitem{Tan}
Tan, K.S. and Blake, D. and MacMinn, R.
\textit{Longevity {R}isk and {C}apital {M}arkets: the 2013 - 14 {U}pdate}.
Insurance: Mathematics and Economics, 63:1-11, 2015.

\bibitem{Tsai}
Tsai, J.T. and Tzeng, L.Y.
\textit{{P}ricing of {M}ortality-linked {C}ontigent {C}laims: an {E}quilibrium {A}pproach}.
ASTIN Bulletin, 43(2):97-121, 2013.

\bibitem{Vitiello}
Vitiello, L. and Poon, S.-H.
\textit{General {E}quilibrium and {P}reference {F}ree {M}odel for {P}ricing {O}ptions under {T}ransformed {G}amma {D}istribution}.
Journal of Future Markets, 30(5):409-431, 2010.

\bibitem{Wang1}
Wang, S.S.
\textit{A {C}lass of {D}istortion {O}perators for {P}ricing {F}inancial and {I}nsurance {R}isks}.
Journal of Risk Insurance, 67(1):15-36, 2000.

\bibitem{Wang2}
Wang, S.S.
\textit{A {U}niversal {F}ramework for {P}ricing {F}inancial and {I}nsurance {R}isks}.
ASTIN Bulletin, 32(2):213-234, 2002.

\bibitem{Weir}
Weir, D.R.
\textit{Tontines, {P}ublic {F}inance, and {R}evolution in {F}rance and {E}ngland 1688-1789},
The Journal of Economic History, 49(1):95-124, 1989.

\bibitem{Young7}
Young, V.
\textit{Pricing {L}ife {I}nsurance under {S}tochastic {M}ortality via the {I}nstantaneous {S}harpe {R}atio}.
Insurance: Mathematics and Economics, 42(2):691-703, 2008.

\bibitem{Zhou}
Zhou, R. and Li, J.S.-H.
\textit{A {C}autionary {N}ote on {P}ricing {L}ongevity {I}ndex {S}waps}.
Scandinavian Actuarial Journal, 2013(1):1-23, 2013.


\bibitem{Zhou13}
Zhou, R., Li, J. S.-H. and Tan, K. S.
\textit{Economic {P}ricing of {M}ortality-{L}inked {S}ecurities: {A} {T}\^{a}tonnement {A}pproach}.
Journal of Risk and Insurance, 82(1):65-96, 2015.


\end{thebibliography}
\end{document}